\documentclass[preprint,12pt]{article}
\usepackage{bbm}
\usepackage{amsfonts}
\usepackage{mathrsfs}
\usepackage{amssymb}
\usepackage{amsmath}
\usepackage{amsthm}
\usepackage{color}
\usepackage{url}
\usepackage{hyperref}
\usepackage{multirow}
\usepackage{bbding}
\usepackage[ruled,vlined]{algorithm2e}
\usepackage{latexsym}
\usepackage[final]{graphicx}
\usepackage{cases}
\usepackage{amsthm}

\usepackage{lastpage}
\usepackage{epsfig}
\usepackage[misc]{ifsym}
\usepackage{lineno}

\newtheorem{theorem}{Theorem}[section]

\newtheorem{lemma}[theorem]{Lemma}

\theoremstyle{definition}
\newtheorem{definition}[theorem]{Definition}
\newtheorem{remark}[theorem]{Remark}

\begin{document}

\begin{center}

\textbf{\Large{Asymptotically Ideal Hierarchical Secret Sharing Based on CRT for Integer Ring}}

\vspace{0.7cm}

Jian Ding$^{1,2}$, Cheng Wang$^2$, Hongju Li$^{1,3}$, Cheng Shu$^2$, Haifeng Yu$^2$

\vspace{3mm}
\footnotesize{
$^1$School of Mathematics and Big Data, Chaohu University, Hefei 238024, China\\
$^2$School of Mathematics and Statistics, Hefei University, Hefei 230000, China\\
$^3$}School of Computer Science and Technology, University of Science and Technology of China, Hefei 230026, China

\footnotetext{\footnotesize {This research was supported in part by Projects of Chaohu University under Grant
KYQD-202220, Grant 2024cxtd147 and Grant hxkt20250327, in part by University Natural Science Research Project of Anhui Province under Grant 2024AH051324 and Grant 2025AHGXZK30056.}}

\footnotetext{\footnotesize {\noindent Corresponding author: Jian Ding}}

\footnotetext{\footnotesize {E-mail address: dingjian\_happy@163.com (J. Ding)}}
\end{center}

\noindent\textbf{Abstract}: In Shamir's secret sharing scheme, all participants possess equal privileges. However, in many practical scenarios, it is often necessary to assign different levels of authority to different participants. To address this requirement, Hierarchical Secret Sharing (HSS) schemes were developed, which partitioned all participants into multiple subsets and assigned a distinct privilege level to each. Existing Chinese Remainder Theorem (CRT)-based HSS schemes benefit from flexible share sizes, but either exhibit security flaws or have an information rate less than $\frac{1}{2}$. In this work, we propose a disjunctive HSS scheme and a conjunctive HSS scheme by using the CRT for integer ring and one-way functions. Both schemes are asymptotically ideal and are proven to be secure.

\noindent\textbf{Keywords}: disjunctive hierarchical secret sharing, conjunctive hierarchical secret sharing, asymptotically ideal secret sharing, Chinese Remainder Theorem

\section {Introduction}
Secret Sharing (SS) \cite{Shamir1979, Blakley1979} is a method for sharing a secret among a group of participants. It typically consists of two phases: the share generation phase and the secret reconstruction phase. In the share generation phase, a dealer divides the secret into multiple shares and securely distributes each share to its designated participant. In the secret reconstruction phase, any authorized subset can reconstruct the secret by gathering their shares together, while any unauthorized subset cannot. The collection of all authorized subsets is known as the \emph{access structure}. An SS scheme is termed \emph{perfect} if any given unauthorized subset gains no information about the secret. An SS scheme is said to be \emph{ideal} if it is perfect and has an information rate 1. It is referred to as \emph{asymptotically ideal} if it is asymptotically perfect and its information rate approaches 1 as the secret size increases.

The $(t,n)$-threshold scheme is an SS scheme such that any subset of at least $t$ participants is authorized, while any subset with $(t-1)$ or fewer participants is unauthorized. In a $(t,n)$-threshold scheme, all participants possess identical privileges, but in many real-world scenarios, it is often necessary to assign differing privilege levels to different participants. To address this need, Simmons \cite{Simmons1988} introduced the hierarchical threshold access structure in 1988, which partitions the participants into multiple disjoint subsets and assigns a distinct privilege level to each. More precisely, let $\mathcal{P}$ be the set of all participants, which is divided into $u$ disjoint subsets $\mathcal{P}_1, \mathcal{P}_2,\ldots, \mathcal{P}_u$. Denote by $t_1, t_2,\ldots, t_u$ be a sequence of thresholds such that $1\leq t_1<t_2<\cdots<t_u$. By using geometric methods, Simmons \cite{Simmons1988} constructed an SS scheme with access structure
      \[\Gamma_1=\{\mathcal{A} \subseteq \mathcal{P}: \exists \ell \in \{1,2,\ldots,u\}~such~that~|\mathcal{A}\cap(\bigcup_{w=1}^{\ell}\mathcal{P}_{w})|\geq t_{\ell}\},\]
which is called a Disjunctive Hierarchical Secret Sharing (DHSS) scheme. However, this scheme is not ideal and requires the dealer to verify the non-singularity of many matrices. Tassa \cite{Tassa2007} introduced the Conjunctive Hierarchical Secret Sharing (CHSS) scheme, which is an SS scheme with access structure
      \[\Gamma_2=\{\mathcal{A} \subseteq \mathcal{P}: |\mathcal{A}\cap(\bigcup_{w=1}^{\ell}\mathcal{P}_{w})|\geq t_{\ell} ~\mathrm{for~all}~\ell \in \{1,2,\ldots,u\}\}.\]
Moreover, an ideal CHSS scheme and an ideal DHSS scheme were proposed by using Birkhoff interpolation, and the dealers were required to check many matrices. Chen et al. \cite{Chenqi2022} constructed ideal CHSS and DHSS schemes based on polymatroids, requiring the dealer to spend polynomial time to determine specific parameters using specialized algorithms. More recently, Yuan et al. \cite{YuanJiaotong2022} proposed a CHSS scheme based on linear homogeneous recurrence relations and one-way functions. Although their scheme is not ideal due to the use of one-way functions and the publication of many values, it incurs lower computational cost compared to the scheme by Chen et al. \cite{Chenqi2022}.

The Chinese Remainder Theorem (CRT) provides the flexibility to assign shares of varying sizes to different participants. Harn et al. \cite{Harn-Miao2014} constructed a DHSS scheme based on the CRT for integer ring, which was insecure pointed by Ersoy et al. \cite{Oguzhan-Ersoy2016}. Furthermore, Ersoy et al. \cite{Oguzhan-Ersoy2016} proposed a new DHSS scheme and a new CHSS scheme by using the CRT for integer ring and hash functions. Their information rates were smaller than $\frac{1}{2}$. Tiplea et al. \cite{Tiplea2021} proposed an asymptotically ideal DHSS scheme and an asymptotically ideal CHSS scheme by using the CRT for integer ring, achieving information rates approaching 1. Recently, Yang et al. \cite{Yangjing2024} constructed an ideal DHSS scheme by using the CRT for polynomial rings. However, all these CRT-based ideal or asymptotically ideal DHSS schemes and CHSS schemes \cite{Tiplea2021,Yangjing2024} have been shown to be insecure in our other work \cite{Hongjuarx}.

\emph{Our contributions}. By using the CRT for integer ring and one-way functions, we construct two asymptotically ideal hierarchical secret sharing schemes with flexible share sizes. More precisely, we make the following contributions:
\begin{itemize}
  \item[(1)] We construct an asymptotically ideal DHSS scheme, which has flexible share sizes, but the ideal DHSS schemes of Tassa et al. \cite{Tassa2007} and Cheng et al.\cite{Chenqi2022} cannot. Compared with the CRT-based DHSS schemes of Harn et al. \cite{Harn-Miao2014}, Ersoy et al. \cite{Oguzhan-Ersoy2016}, Tiplea et al. \cite{Tiplea2021} and Yang et al. \cite{Yangjing2024}, our scheme is a secure and asymptotically ideal DHSS scheme (See Table 1).
\begin{table}[!htb]
  \centering
   {\bf Table 1.}  Disjunctive hierarchical secret sharing schemes. \\
  \begin{tabular}{cccccc}
  \hline
    \multirow{2}*{Schemes}  & \multirow{2}*{Secure}    & Accommodate             &\multirow{2}*{Perfectness}  &Information\\
    ~                        &~                        &flexible share sizes      &~                          &rate $\rho$\\
  \hline
  \hline
  \cite{Tassa2007}              &Yes                      &No                      &Yes                       &$\rho=1$\\
  \hline
  \cite{Chenqi2022}             &Yes                      &No                      &Yes                       &$\rho=1$\\
  \hline
  \cite{Harn-Miao2014}          &No                       &Yes                     &No                        &$\rho<1$\\
  \hline
   \cite{Tiplea2021}            &No                       &Yes                     &No                        &$\rho<1$\\
  \hline
  \cite{Yangjing2024}           &No                       &Yes                     &No                        &$\rho=1$\\
  \hline
  \cite{Oguzhan-Ersoy2016}      &Yes                      &Yes                   &Asymptotic                   &$\rho<\frac{1}{2}$\\
  \hline
  Our scheme                    &Yes                      &Yes                   &Asymptotic                  &Approaches~1\\
    \hline
  \end{tabular}
 \end{table}

 \item[(2)] We construct an asymptotically ideal CHSS scheme, which has flexible share sizes, but the ideal CHSS schemes of Tassa et al. \cite{Tassa2007}, Cheng et al.\cite{Chenqi2022} and the asymptotically ideal CHSS scheme of Yuan et al. \cite{YuanJiaotong2022} cannot. Compared with the CRT-based CHSS schemes of Tiplea et al. \cite{Tiplea2021} and Ersoy et al. \cite{Oguzhan-Ersoy2016}, our scheme is a secure and asymptotically ideal CHSS scheme (See Table 2).
     \begin{table}[!htb]
  \centering
   {\bf Table 2.}  Conjunctive hierarchical secret sharing schemes. \\
  \begin{tabular}{cccccc}
  \hline
    \multirow{2}*{Schemes}  & \multirow{2}*{Secure}    & Accommodate             &\multirow{2}*{Perfectness}  &Information\\
    ~                        &~                        &flexible share sizes      &~                          &rate $\rho$\\
  \hline
  \hline
  \cite{Tassa2007}              &Yes                      &No                      &Yes                       &$\rho=1$\\
  \hline
  \cite{Chenqi2022}             &Yes                      &No                      &Yes                       &$\rho=1$\\
  \hline
  \cite{YuanJiaotong2022}       &Yes                       &No                     &Asymptotic                &$\rho=1$\\
  \hline
   \cite{Tiplea2021}            &No                       &Yes                     &No                        &Approaches 1\\
  \hline
  \cite{Oguzhan-Ersoy2016}      &Yes                      &Yes                  &Asymptotic                   &$\rho<\frac{1}{2}$\\
  \hline
  Our scheme                    &Yes                      &Yes                   &Asymptotic                  &Approaches~1\\
  \hline
  \end{tabular}
 \end{table}

\end{itemize}

\emph{Paper organization}. The remainder of this paper is structured as follows. Section 2 introduces the necessary preliminaries. Section 3 presents the construction of our DHSS scheme, along with its security analysis. In section 4, we construct a CHSS scheme and provide its security analysis. Section 5 concludes the paper.

\section{Preliminaries}\label{Sec: Prelim}
This section will introduce the relevant mathematical tools required in this paper, the related concepts of secret sharing, and the Asmuth-Bloom secret sharing scheme \cite{Asmuth-Bloom1983}. Denote by $[n]=\{1,2,\ldots,n\}$ and $[N_1, N_2]=\{N_1,N_1+1,\ldots,N_2\}$, where $n, N_1,N_2$ be positive integers such that $N_1<N_2$. Let $\mathbb{Z}$ denote the set of all integers, and $\mathbb{Z}_n$ the ring that integers modulo $n$.

\subsection{Relevant mathematical tools}
\begin{lemma}[CRT for integer ring, \cite{CRTdefinition}]\label{le: CRT}
Let $m_1,m_2,\ldots,m_n$ be positive integers that are pairwise coprime. Given any integers $x_{1},x_{2},\ldots,x_{n}$ and a system of congruences
 \begin{displaymath}
         \left\{\begin{aligned}
            x &\equiv x_{1} \pmod{m_{1}},\\
            x &\equiv x_{2} \pmod{m_{2}},\\
              &~\vdots\\
            x &\equiv  x_{n} \pmod{m_{n}},\\
           \end{aligned} \right.
        \end{displaymath}
it holds that
                  \[x\equiv\sum\limits_{i=1}^{n}\lambda_iM_i x_i \pmod {M},\]
where $M=\prod\limits_{i=1}^{n}m_i$, $M_i=\frac{M}{m_i}$, and $\lambda_i\equiv M_i^{-1}\pmod {m_i}$.
If the integer $x$ satisfies $0 \leq x < M$, the solution is unique, denoted by
                 \[x=\sum\limits_{i=1}^{n}\lambda_iM_i x_i \pmod {M}.\]
\end{lemma}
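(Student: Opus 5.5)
The plan is to verify directly that the explicit candidate $x_0=\sum_{i=1}^{n}\lambda_iM_ix_i$ satisfies every congruence, and then to show that any two solutions agree modulo $M$. Uniqueness in $[0,M)$ follows from that.

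\emph{Existence of the inverses.} First I justify that $\lambda_i$ is well defined. Since the $m_j$ are pairwise coprime, $\gcd(m_i,m_j)=1$ for every $j\neq i$. Hence $\gcd(m_i,M_i)=\gcd\bigl(m_i,\prod_{j\neq i}m_j\bigr)=1$: a prime dividing both $m_i$ and the product would divide some $m_j$ with $j\neq i$. By B\'ezout's identity there are integers $a,b$ with $aM_i+bm_i=1$, and $\lambda_i:=a$ then satisfies $\lambda_iM_i\equiv 1\pmod{m_i}$.

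\emph{Verification.} Fix $k\in[n]$ and reduce $x_0$ modulo $m_k$. For $i\neq k$, the factor $m_k$ divides $M_i=\prod_{j\neq i}m_j$, so $\lambda_iM_ix_i\equiv 0\pmod{m_k}$. The remaining term gives $\lambda_kM_kx_k\equiv x_k\pmod{m_k}$. Thus $x_0\equiv x_k\pmod{m_k}$ for every $k$, so $x_0$ solves the system.

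\emph{Uniqueness.} Let $x$ be any solution. Then $m_k\mid x-x_0$ for all $k$. Since the $m_k$ are pairwise coprime, their least common multiple is their product. This can be proved by induction on $n$, using that $\gcd(a,b)=1$, $a\mid y$ and $b\mid y$ together imply $ab\mid y$, and that $m_{n}$ is coprime to $m_1\cdots m_{n-1}$. Hence $M\mid x-x_0$, that is, $x\equiv \sum_{i=1}^n\lambda_iM_ix_i\pmod M$. Conversely, every $x\equiv x_0\pmod M$ is a solution, because each $m_k$ divides $M$.

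\emph{Conclusion.} Each residue class modulo $M$ has exactly one representative in $[0,M)$. Therefore the system has exactly one solution with $0\le x<M$, namely the least nonnegative residue of $x_0$ modulo $M$, as stated.

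The only mildly delicate point is the step $\operatorname{lcm}=\prod m_k$, which is where pairwise coprimality is genuinely needed; everything else is routine.
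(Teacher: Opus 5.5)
The paper does not prove this lemma at all; it is quoted from Cohen's textbook as the classical Chinese Remainder Theorem. Your argument is exactly the standard proof of the explicit formula in the statement (B\'ezout for the existence of $\lambda_i$, reduction of $\sum_{i=1}^{n}\lambda_iM_ix_i$ modulo each $m_k$, and pairwise coprimality to upgrade $m_k\mid x-x_0$ for all $k$ to $M\mid x-x_0$), and it is correct and complete as written.
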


\begin{definition}[$k$-compact sequence of co-primes \cite{Tiplea2018}]\label{def:k-compact sequences}
Let $L=\{m_0,m_1,\ldots,m_n\}$ be a sequence of co-primes.
\begin{itemize}
  \item[(1)] The sequence $L$ is called $(k,\theta)$-compact, where $k\geq 1$ and $\theta \in (0,1)$ are real numbers, if $m_{0}<m_{1}<\cdots<m_{n}$, and $km_{0}<m_{i}<km_{0}+m_{0}^{\theta}$ for all $i \in [n]$.

  \item[(2)] The sequence $L$ is called $k$-compact if it is $(k,\theta)$-compact for some $\theta \in (0,1)$.
\end{itemize}
\end{definition}

\begin{remark}
A sequence of co-primes $L=\{m_0,m_{1},\ldots,m_{n}\}$ is $k$-compact if and only if the sub-sequences $L_{j}=\{m_0,m_{1},\ldots,m_{j}\}, j \in [n]$ are $k$-compact.
\end{remark}
\subsection{Secret sharing}
For discrete random variables $\mathbf{X}$ and $\mathbf{Y}$, let $\mathsf{H}(\mathbf{X})$ denote the Shannon entropy of $\mathbf{X}$, and $\mathsf{H}(\mathbf{X} \mid \mathbf{Y})$ the conditional entropy of $\mathbf{X}$ given $\mathbf{Y}$.

\begin{definition}[Secret sharing scheme]
 Let $\mathcal{P}=\{P_1,P_2,\ldots,P_n\}$ be a set of $n$ participants. A secret sharing scheme consists of a share generation phase and a secret reconstruction phase, as follows.
\begin{itemize}
  \item[(1)] Share Generation Phase: Let $\mathcal{S}$ be the secret space. For any secret $s \in \mathcal{S}$, the dealer distributes shares to participants using a share distribution algorithm
                      \[\mathsf{SHARE}\colon\mathcal{S}\times\mathcal{R}\mapsto\mathcal{S}_1\times\mathcal{S}_2\times\dotsb\times\mathcal{S}_n,\]
  where $\mathcal{R}$ is a set of random symbol set, and $\mathcal{S}_i$ is the share space of the participant $P_i$.
  \item[(2)] Secret Reconstruction Phase. Participants in any authorized set  $\mathcal{A}\subseteq \mathcal{P}$ can use their shares to reconstruct the secret through the secret reconstruction algorithm
                        \[\mathsf{RECON}\colon\prod_{P_i\in\mathcal{A}}\mathcal{S}_i\mapsto\mathcal{S}.\]
  Conversely, any unauthorized set cannot reconstruct the secret.
\end{itemize}
\end{definition}

In this work, the index $i$ denotes the $i$-th participant $P_i$, and the set $[n]$ corresponds to the participant set $\mathcal{P}$. A secret sharing scheme is said to be perfect if no unauthorized subset can obtain any information about the secret.

\begin{definition}[Information rate, \cite{Ning2018}]\label{def: Information rate}
The information rate of a secret sharing scheme is defined as
                               \[\rho=\frac{\log_2|\mathcal{S}|}{\max_{i\in [n]}^{}{\log_2|\mathcal{S}_i|}},\]
where $|\mathcal{S}|$ is the size of the secret space, and $|\mathcal{S}_i|$ is the size of the share space of participant $P_i$.
\end{definition}

For any perfect secret sharing scheme, the information rate satisfies $\rho \leq 1$, and a perfect scheme with $\rho = 1$ is called ideal.

\begin{definition}[Disjunctive hierarchical secret sharing scheme, \cite{Simmons1988}]\label{def: Multilevel secret sharing of disjunctive}
Let $\mathcal{P}$ be the set of all participants, partitioned into $u$ disjoint subsets $\mathcal{P}_1, \mathcal{P}_2,\ldots, \mathcal{P}_u$, i.e.,
 \[\mathcal{P}=\cup_{\ell=1}^{u} \mathcal{P}_{\ell},~and~\mathcal{P}_{\ell_1}\cap\mathcal{P}_{\ell_2}=\varnothing~ for~any~ 1\leq \ell_1<\ell_2\leq u.\]
Denote by $t_1, t_2,\ldots, t_u$ be a sequence of thresholds such that $1\leq t_1<t_2<\cdots<t_u$. A Disjunctive Hierarchical Secret Sharing (DHSS) scheme is a secret sharing scheme with the following properties:
\begin{itemize}
  \item[(1)] Correctness. Any subset
       \[\mathcal{A} \in\Gamma_1=\{\mathcal{A} \subseteq \mathcal{P}: \exists \ell \in [u]~such~that~|\mathcal{A}\cap(\bigcup_{w=1}^{\ell}\mathcal{P}_{w})|\geq t_{\ell}\}.\]
       can reconstruct the secret.
  \item[(2)] Privacy. No subset $\mathcal{B}\notin \Gamma_1$ can reconstruct the secret.
\end{itemize}
A DHSS scheme is ideal if it is perfect and achieves information rate one.
\end{definition}

\begin{definition}[Conjunctive hierarchical secret sharing scheme, \cite{Tassa2007}]\label{def: Multilevel secret sharing of conjunctive}
The setting is the same as in Definition \ref{def: Multilevel secret sharing of disjunctive}, but the access structure $\Gamma_1$ is replaced by
\begin{displaymath}
       \Gamma_2=\{\mathcal{A} \subseteq \mathcal{P}: |\mathcal{A}\cap(\bigcup_{w=1}^{\ell}\mathcal{P}_{w})|\geq t_{\ell} ~for~ \forall \ell \in [u]\}.
\end{displaymath}
\end{definition}

\begin{definition}[Asymptotically ideal hierarchical secret sharing scheme, \cite{Quisquater2002}]\label{def: Asymptotically Ideal HSS}
Considetr a hierarchical secret sharing (either DHSS or CHSS) scheme with the secret space $\mathcal{S}$ and share spaces $\mathcal{S}_i, i\in [n]$. The scheme is said to be asymptotically ideal if it satisfies the following two asymptotic conditions.
\begin{itemize}
  \item[(1)] Asymptotic perfectness. For all $\epsilon_1>0$, there is a positive integer $\sigma_1$ such that for all $\mathcal{B}\notin\Gamma$ and $|\mathcal{S}|>\sigma_1$, the loss entropy
                       \[\Delta(|\mathcal{S}|)=\mathsf{H}(\mathbf{S})-\mathsf{H}(\mathbf{S}|\mathbf{V}_{\mathcal{B}})\leq \epsilon_1,\]
   where $\mathsf{H}(\mathbf{S})\neq 0$, and $\mathbf{S}, \mathbf{V}_{\mathcal{B}}$ are random variables corresponding to the secret and the knowledge of $\mathcal{B}$, respectively.
  \item[(2)] Asymptotic maximum information rate. For all $\epsilon_2>0$, there is a positive integer $\sigma_2$ such that for all $\mathcal{B}\notin\Gamma$ and $|\mathcal{S}|>\sigma_2$, it holds that
      \[\frac{\max_{i\in [n]}^{}{\mathsf{H}(\mathbf{S}_i)}}{\mathsf{H}(\mathbf{S})}\leq 1+\epsilon_2.\]
\end{itemize}
\end{definition}

\subsection{Dr$\breve{\mathrm{a}}$gan-Tiplea secret sharing scheme}
Based on the Chinese Remainder Theorem (CRT) for integer ring, the Asmuth-Bloom secret sharing scheme \cite{Asmuth-Bloom1983} enables any $t$ participants to reconstruct the secret, while any $t-1$ participants cannot. However, this scheme is not perfect and achieves an information rate less than $1$. To address these limitations, Dr$\breve{\mathrm{a}}$gan and Tiplea \cite{Tiplea2018} proposed a new secret sharing scheme using the CRT for integer ring along with a $k$-compact sequence of pairwise co-prime integers. Their scheme, referred to as the Dr$\breve{\mathrm{a}}$gan-Tiplea secret sharing scheme, can be viewed as a generalization of the Asmuth-Bloom approach. More details are given below.

Let $m_0$ be a positive integer and denote by $\mathbb{Z}_{m_0}$ the ring of integers modulo $m_0$. The Dr$\breve{\mathrm{a}}$gan-Tiplea secret sharing scheme is a secret sharing scheme consisting of two phases.
\begin{itemize}
  \item[(1)] Share Generation Phase: For any secret $s\in \mathcal{S}=\mathbb{Z}_{m_{0}}$, the dealer generates and publishes a $k$-compact sequence of co-primes $L=\{m_{0},m_{1},
       \ldots,m_{n}\}$, namely, the following three conditions are satisfied.
          \begin{itemize}
            \item[(i)] $m_0, m_1,\ldots, m_n$ are pairwise co-primes.
             \item[(ii)] $m_{0}<m_{1}<\cdots<m_{n}$.
             \item[(iii)]$km_{0}<m_{i}<km_{0}+m_{0}^{\theta}$ for all $i \in [n]$, where $k \geq 1$ and $\theta \in (0,1)$ are real numbers.
           \end{itemize}
    The dealer then chooses an integer $\alpha$ such that $0\leq y=s+\alpha m_{0}<\prod_{i=1}^{t}m_{i}$. For each $i\in [n]$, the share $s_i=y \pmod {m_{i}}$ is computed and securely distributed to the participant $P_i$.
  \item[(2)] Secret Reconstruction Phase: Given any $\mathcal{A}\subseteq [n]$ of size $|\mathcal{A}|\geq t$, participants from $\mathcal{A}$ establish a system of congruences
                \[y\equiv s_{i} \pmod{m_{i}},i\in \mathcal{A}.\]
    Since $y<\prod_{i=1}^{t}m_{i}<\prod_{i\in \mathcal{A}}m_{i}$, then $y=s+\alpha m_{0}$ is determined by using Lemma \ref{le: CRT}, and $s=y \pmod {m_0}$.
\end{itemize}

\begin{remark}
The Dr$\breve{\mathrm{a}}$gan-Tiplea secret sharing scheme is asymptotically ideal if the $k$-compact sequence of pairwise co-prime integers is a $1$-compact sequence of co-primes (see Corollary 1 of \cite{Tiplea2018}). Moreover, such $1$-compact sequences can be generated efficiently (see Section $5$ of \cite{Tiplea2021}).
\end{remark}

\begin{lemma}[\cite{Tiplea2018}]\label{le: resutls-Tiplea2018}
In the Dr$\breve{\mathrm{a}}$gan-Tiplea secret sharing scheme, let $\mathcal{B}\subset \mathcal{P}, |\mathcal{B}|=t-1$ and $\eta_\mathcal{B}=\left\lfloor\left.{\prod\limits_{i=1}^{t}m_i}\right/{\prod\limits_{i\in\mathcal{B}}m_i}\right\rfloor$, where $\lfloor \cdot\rfloor$ is the floor function. Denote by
      \[\mathcal{G}_\mathcal{B}=\{g\in\mathbb{Z}:~L~\mathrm{is}~k-\mathrm{compact}, 0\leq g <\prod_{i=1}^{t}m_{i},~\mathrm{and}~g\equiv s_i \pmod{m_i}, i\in \mathcal{B}\}.\]
The following results hold.
\begin{itemize}
  \item[(1)] $|\mathcal{G}_\mathcal{B}|\in \{\eta_\mathcal{B},\eta_\mathcal{B}+1\}$. (see Lemma 2 in \cite{Tiplea2018})

  \item[(2)] Define the mapping $\Psi_{\mathcal{B}}$ by
 \[\Psi_{\mathcal{B}}: \mathcal{G}_{\mathcal{B}}\mapsto \mathcal{S}, g\mapsto g\pmod{m_0}.\]\
  For any $s\in \mathcal{S}$, let
           \[\Psi_{\mathcal{B}}^{-1}(s)=\{g\in \mathcal{G}_{\mathcal{B}}: g\equiv s\pmod{m_0}\},\]
  then $|\Psi_{\mathcal{B}}^{-1}(s)|=\left\lfloor \frac{|\mathcal{G}_\mathcal{B}|}{m_0}\right\rfloor+a_s$ for some $a_s\in \mathbb{Z}_2$. Let $\delta_0$ and $\delta_1$ be the sizes of the sets $\mathcal{G}_{\mathcal{B},0}$ and $\mathcal{G}_{\mathcal{B},1}$, respectively, where
         \begin{displaymath}
             \begin{aligned}
             &\mathcal{G}_{\mathcal{B},0}=\left\{s\in \mathcal{S}:|\Psi_{\mathcal{B}}^{-1}(s)|=\left\lfloor \frac{|\mathcal{G}_\mathcal{B}|}{m_0}\right\rfloor\right\},\\
             &\mathcal{G}_{\mathcal{B},1}=\left\{s\in \mathcal{S}:|\Psi_{\mathcal{B}}^{-1}(s)|=\left\lfloor \frac{|\mathcal{G}_\mathcal{B}|}{m_0}\right\rfloor+1\right\},
             \end{aligned}
        \end{displaymath}
  then $\delta_1=|\mathcal{G}_{\mathcal{B}}|\pmod{m_0}, \delta_0=m_0-\delta_1$ and
            \[|\mathcal{G}_\mathcal{B}|=\delta_0\left\lfloor \frac{|\mathcal{G}_\mathcal{B}|}{m_0}\right\rfloor+\delta_1\left(\left\lfloor \frac{|\mathcal{G}_\mathcal{B}|}{m_0}\right\rfloor+1\right).~(\mathrm{see~the~proof~of~Lemma~2~in}~\cite{Tiplea2018})\]

  \item[(3)] $\lim\limits_{m_0\to\infty}\frac{\prod\limits_{i=1}^{t}m_i}{m_0\prod\limits_{i\in\mathcal{B}}m_i}=k$ (see the proof of Theorem 1 in \cite{Tiplea2018}). Based on this, it has that for sufficiently large $m_0$, there are only two cases need to be considered. (see the proof of Theorem 1 in \cite{Tiplea2018})
     \begin{itemize}
       \item[(i)] When $km_0-(m^{\theta}_0+1)<|\mathcal{G}_\mathcal{B}|<km_0$, it has that $\left\lfloor \frac{|\mathcal{G}_\mathcal{B}|}{m_0}\right\rfloor=k-1$, $\lim\limits_{m_0\to\infty}\frac{\delta_0}{m_0}=0$, and $\lim\limits_{m_0\to\infty}\frac{\delta_1}{m_0}=1$.
       \item[(ii)] When $km_0\leq|\mathcal{G}_\mathcal{B}|<km_0+(m^{\theta}_0+1)$, it has that $\left\lfloor \frac{|\mathcal{G}_\mathcal{B}|}{m_0}\right\rfloor=k$, and
       $0\leq\delta_1<m^{\theta}_0+1$, which means that $\lim\limits_{m_0\to\infty}\frac{\delta_0}{m_0}=1, \lim\limits_{m_0\to\infty}\frac{\delta_1}{m_0}=0$.
     \end{itemize}
 In other words, there is an integer $b\in \mathbb{Z}_{2}$ such that
     \[\lim\limits_{m_0\to\infty}\frac{\delta_b}{m_0}=1, \lim\limits_{m_0\to\infty}\frac{{\sum\limits_{q\in \mathbb{Z}_{2},q\neq b}}\delta_q}{m_0}=0.\]
  \item[(4)] When $m_0$ goes to infinity, the loss entropy
           \begin{displaymath}
             \begin{aligned}
             \Delta(|\mathcal{S}|)
             &=\mathsf{H}(\mathbf{S})-\mathsf{H}(\mathbf{S}|\mathbf{V}_{\mathcal{B}})\\
             &=\log_2{m_0}+\delta_0\frac{\left\lfloor \frac{|\mathcal{G}_\mathcal{B}|}{m_0}\right\rfloor}{|\mathcal{G}_\mathcal{B}|}\log_2\frac{\left\lfloor \frac{|\mathcal{G}_\mathcal{B}|}{m_0}\right\rfloor}{|\mathcal{G}_\mathcal{B}|}+\delta_1\frac{\left\lfloor \frac{|\mathcal{G}_\mathcal{B}|}{m_0}\right\rfloor+1}{|\mathcal{G}_\mathcal{B}|}\log_2\frac{\left\lfloor \frac{|\mathcal{G}_\mathcal{B}|}{m_0}\right\rfloor+1}{|\mathcal{G}_\mathcal{B}|}\\
              \end{aligned}
          \end{displaymath}
  goes to $0$, where $\mathsf{H}(\mathbf{S})\neq 0$, and $\mathbf{S}, \mathbf{V}_{\mathcal{B}}$ are random variables corresponding to the secret and the shares of $\mathcal{B}$, respectively. (see the proof of Lemma 2 in \cite{Tiplea2018})

\end{itemize}
\end{lemma}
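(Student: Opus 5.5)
The statement is a compilation of facts about the Dr\u{a}gan--Tiplea scheme, so I would prove the four items in order, each from the previous ones, using only the CRT (Lemma~\ref{le: CRT}) and the $k$-compactness of $L$.

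For item (1), put $M_{\mathcal{B}}=\prod_{i\in\mathcal{B}}m_i$. By Lemma~\ref{le: CRT}, the congruences $g\equiv s_i \pmod{m_i}$, $i\in\mathcal{B}$, have a unique solution $g_0\in[0,M_{\mathcal{B}})$. Hence $\mathcal{G}_\mathcal{B}=\{g_0+jM_{\mathcal{B}}: j\geq 0,\ g_0+jM_{\mathcal{B}}<\prod_{i=1}^{t}m_i\}$. Writing $\prod_{i=1}^t m_i=\eta_\mathcal{B}M_\mathcal{B}+r$ with $0\le r<M_\mathcal{B}$, the number of admissible $j$ is $\eta_\mathcal{B}$ if $g_0\ge r$ and $\eta_\mathcal{B}+1$ if $g_0<r$.

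For item (2), I would use that $\gcd(M_\mathcal{B},m_0)=1$, so $j\mapsto g_0+jM_\mathcal{B}\pmod{m_0}$ is a bijection on every block of $m_0$ consecutive values of $j$. Split the $|\mathcal{G}_\mathcal{B}|$ consecutive indices into $\lfloor|\mathcal{G}_\mathcal{B}|/m_0\rfloor$ full blocks and one remainder block of length $|\mathcal{G}_\mathcal{B}|\bmod m_0$. Each full block contributes exactly one preimage to every residue, and the remainder block contributes at most one more, with distinct residues. This gives $|\Psi_\mathcal{B}^{-1}(s)|=\lfloor|\mathcal{G}_\mathcal{B}|/m_0\rfloor+a_s$, $\delta_1=|\mathcal{G}_\mathcal{B}|\bmod m_0$, $\delta_0=m_0-\delta_1$, and the counting identity follows by summing over $s$.

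For item (3), $\mathcal{B}$ is a set of $t-1$ indices, so $\prod_{i=1}^t m_i/(m_0M_\mathcal{B})$ is a ratio of $t$ moduli in the numerator to $t$ moduli ($m_0$ times $t-1$ of the $m_i$) in the denominator. Compactness gives $km_0<m_i<km_0+m_0^\theta$, so every $m_i/(km_0)\to1$ while the extra $m_0$ in the denominator is exactly $m_0$. The ratio therefore tends to $k$, and more quantitatively, $\prod_{i=1}^t m_i/M_\mathcal{B}$ lies within $O(m_0^\theta)$ of $km_0$.

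\textbf{Main obstacle.} The delicate step is turning this into the stated two-sided window $km_0-(m_0^\theta+1)<|\mathcal{G}_\mathcal{B}|<km_0+m_0^\theta+1$. I would pin $\prod_{i=1}^t m_i/M_\mathcal{B}$ between the smallest and largest modulus among the $t$ factors, using that numerator and denominator share $t-1$ factors in some pairing, and then account for the floor and the $+1$ from item (1). Once the window holds, the two cases follow directly:
\begin{itemize}
\item[(i)] if $|\mathcal{G}_\mathcal{B}|<km_0$, the floor is $k-1$ and $\delta_1>m_0-m_0^\theta-1$, so $\delta_1/m_0\to1$;
\item[(ii)] if $|\mathcal{G}_\mathcal{B}|\ge km_0$, the floor is $k$ and $\delta_1<m_0^\theta+1$, so $\delta_1/m_0\to0$.
\end{itemize}
Both cases use $\theta<1$; the step is clean when $k$ is an integer, as in the $1$-compact setting, and I would assume this.

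For item (4), assume, as in \cite{Tiplea2018}, that $\alpha$ is chosen uniformly, so that given $\mathbf{V}_\mathcal{B}$ the value $y$ is uniform on $\mathcal{G}_\mathcal{B}$. Then $\Pr[\mathbf{S}=s\mid\mathbf{V}_\mathcal{B}]=|\Psi_\mathcal{B}^{-1}(s)|/|\mathcal{G}_\mathcal{B}|$. With $\mathsf{H}(\mathbf{S})=\log_2 m_0$ and item (2), this yields the displayed formula for $\Delta(|\mathcal{S}|)$. Finally, substitute the limits from item (3): the dominant $\delta_b$ term behaves like $-\log_2 m_0+o(1)$ and the other term tends to $0$, so $\Delta\to0$.
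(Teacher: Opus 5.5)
Your items (1), (2) and the limit in (3) are correct and use the standard CRT and block counting. The paper gives no proof of its own here, only pointers to \cite{Tiplea2018}, but it uses exactly your counting of item (1) in the proof of Lemma~\ref{le: preimage}.

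\textbf{The window in item (3).} The genuine gap is the step you single out as the main obstacle. Pinning $\prod_{i=1}^t m_i/M_\mathcal{B}$ between the smallest and largest of $m_1,\dots,m_t$ by cancelling $t-1$ shared factors only works when $\mathcal{B}\subseteq[t]$. In general $\mathcal{B}$ is an arbitrary $(t-1)$-subset of $[n]$, and nothing cancels.

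The upper end survives for a different reason. If $\mathcal{B}=\{i_1<\dots<i_{t-1}\}$, then $i_j\ge j$, so $m_{i_j}\ge m_j$ and the ratio is at most $m_t<km_0+m_0^\theta$.

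The lower end does not survive. Write $m_i=km_0+e_i$ with $0<e_i<m_0^\theta$. Then $\prod_{i=1}^t m_i/M_\mathcal{B}=km_0+\sum_{i\le t}e_i-\sum_{i\in\mathcal{B}}e_i+O(m_0^{2\theta-1})$. Suppose $m_1,\dots,m_t$ sit just above $km_0$ and $\mathcal{B}$ consists of $t-1$ moduli just below $km_0+m_0^\theta$; nothing in $k$-compactness forbids this once $n\ge 2t-1$. Then the ratio is $km_0-(t-1)m_0^\theta+o(m_0^\theta)$, and for $t\ge 3$ the bound $|\mathcal{G}_\mathcal{B}|>km_0-(m_0^\theta+1)$ fails. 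So the window cannot be proved as stated.

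What you can prove, via Bernoulli's inequality, is $\prod_{i=1}^t m_i/M_\mathcal{B}>(km_0)^t/(km_0+m_0^\theta)^{t-1}\ge km_0-(t-1)m_0^\theta$. Hence
$km_0-(t-1)m_0^\theta-1<|\mathcal{G}_\mathcal{B}|<km_0+m_0^\theta+1$.
This weaker window is all that is needed. It gives $\delta_0=km_0-|\mathcal{G}_\mathcal{B}|<(t-1)m_0^\theta+1$ in case (i) and $\delta_1<m_0^\theta+1$ in case (ii).

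\textbf{Item (4).} This same quantitative bound is what item (4) needs, and your sketch skips over it. The limits $\delta_b/m_0\to1$ are not enough, because both terms of $\Delta$ carry a factor of order $\log_2 m_0$. Put $f=\lfloor|\mathcal{G}_\mathcal{B}|/m_0\rfloor$, so that $f+b=k$ in both cases, and $\epsilon=1-\delta_b k/|\mathcal{G}_\mathcal{B}|$. Then the main term equals $-\bigl(1-\epsilon\bigr)\bigl(\log_2 m_0+o(1)\bigr)$, and the other term is of order $\frac{\delta_{1-b}}{m_0}\log_2 m_0$. You therefore need both $\epsilon\log_2 m_0\to0$ and $\frac{\delta_{1-b}}{m_0}\log_2 m_0\to0$. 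Both hold because the corrected window makes $\epsilon$ and $\delta_{1-b}/m_0$ of order $m_0^{\theta-1}$, and $m_0^{\theta-1}\log_2 m_0\to0$.

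Bounding $\Delta$ this way, uniformly over both cases, also handles a second issue. Which case occurs, and hence $b$, may change with $m_0$ and with the shares, so a single fixed $b$ is not available.

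\textbf{Integrality of $k$.} Your restriction to integer $k$ is necessary, not just convenient. For $k=3/2$ one gets $\delta_0\approx\delta_1\approx m_0/2$, and $\Delta\to-\frac13\log_2\frac32-\frac23\log_2\frac34\approx0.08$ bits.
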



\section{A novel asymptotically ideal DHSS scheme}\label{Sec: our scheme}
In this section, we construct a hierarchical secret sharing scheme for the disjunctive access structure. The construction is based on the CRT for integer ring and one-way functions. We present our scheme in Subsection \ref{Subsec:our DHSS} and analyze its security in Subsection \ref{Subsec: security of our DHSS}.

\subsection{Our scheme 1}\label{Subsec:our DHSS}
Let $\mathcal{P}$ be a set of $n$ participants, divided into $u$ pairwise disjoint subsets $\mathcal{P}_1, \mathcal{P}_2, \dots, \mathcal{P}_u$. For each $\ell\in[u]$, define $n_{\ell}=|\mathcal{P}_{\ell}|$ and let $N_{\ell}=\sum_{w=1}^{\ell} n_{w}$. Additionally, for every $\ell\in[u]$, denote by $t_{\ell}$ the threshold assigned to the union $\bigcup_{w=1}^{\ell}\mathcal{P}_w$. The thresholds satisfy $1\leq t_1<t_2<\cdots<t_u$ and $t_{\ell}\leq N_{\ell}$ for all $\ell\in [u]$. Our scheme 1 consists of two phases: the share generation phase and the secret reconstruction phase.

\textbf{(1) Share Generation Phase.} Let $m_{0}$ be a big positive integer, and let $\mathcal{S}=\mathbb{Z}_{m_{0}}$ be the secret space.
 \begin{itemize}
       \item[Step 1.] The dealer generates and publishes a $k$-compact sequence of co-primes $L=\{m_{0},m_{1},\\
       \ldots,m_{n}\}$, namely, the following three conditions are satisfied.
          \begin{itemize}
            \item[(i)] $m_0, m_1,\ldots, m_n$ are pairwise co-primes.
             \item[(ii)] $m_{0}<m_{1}<\cdots<m_{n}$.
             \item[(iii)]$km_{0}<m_{i}<km_{0}+m_{0}^{\theta}$ for all $i \in [n]$, where $k \geq 1$ and $\theta \in (0,1)$ are real numbers.
           \end{itemize}

       \item[Step 2.] For any given secret $s \in \mathcal{S}$, the dealer selects random integers $c_{i} \in \mathbb{Z}_{m_{i}}$, $i \in [N_{u-1}]$, and
           selects integers $\alpha_{\ell}$, $\ell\in [u]$ such that
           \[0 \leq y_{\ell}=s+\alpha_{\ell}m_{0} <\prod\limits_{i=1}^{t_{\ell}}m_i, \ell\in [u].\]

           The dealer sends the share $s_i$ to the $i$-th participant, where
                \[s_i=\begin{cases}
                   c_i,\mathrm{if}~i\in [N_{u-1}],\\
                 y_u \pmod {m_i},\mathrm{if}~i\in [N_{u-1}+1,N_u].\\
                \end{cases}\]

       \item[Step 3.] The dealer selects $u$ publicly known distinct one-way functions $h_{\ell}, \ell\in [u]$. Each function accepts an input of arbitrary length and produces an output of fixed length $\lfloor \log_{2} m_n \rfloor$. Then the dealer publishes the values
                    \[w_i^{(\ell)}=(y_{\ell}-h_{\ell}(s_i))\pmod{m_i}, \ell\in [u-1], i\in [N_{\ell}],\]
           and $w_i^{(u)}=(y_{u}-h_{u}(s_i))\pmod{m_i}$, $i\in [N_{u-1}]$.
   \end{itemize}

\textbf{(2) Secret Reconstruction Phase}. For any $\mathcal{A} \subseteq \mathcal{P}$ such that $\mathcal{A}^{(\ell)}=\mathcal{A}\cap (\cup_{w=1}^{\ell}\mathcal{P}_{w}),\\
 |\mathcal{A}^{(\ell)}|\geq t_{\ell}$ for some $\ell\in [u]$, participants of $\mathcal{A}^{(\ell)}$ compute
         \[s_i^{(\ell)}=\begin{cases}
           (h_{\ell}(s_i)+w_i^{(\ell)})\pmod{m_i},~\mathrm{if}~\ell\in [u-1], i\in \mathcal{A}^{(\ell)}\subseteq [N_{\ell}],\\
            (h_{u}(s_i)+w_i^{(u)}) \pmod{m_i},~\mathrm{if}~\ell=u, i\in \mathcal{A}^{(\ell)}\cap [N_{u-1}],\\
            s_i,\mathrm{if}~\ell=u, i\in \mathcal{A}^{(\ell)}\cap [N_{u-1}+1,N_u],\\
         \end{cases}\]
and determine
                          \[y_{\ell}=\sum\limits_{i\in \mathcal{A}^{(\ell)}}\lambda_{i,\mathcal{A}^{(\ell)}}M_{i,\mathcal{A}^{(\ell)}}s_i^{(\ell)} \pmod {M_{\mathcal{A}^{(\ell)}}},\]
where $M_{\mathcal{A}^{(\ell)}}=\prod\limits_{i\in\mathcal{A}^{(\ell)}}m_i$, $M_{i,\mathcal{A}^{(\ell)}}=M_{\mathcal{A}^{(\ell)}}/m_i$, $\lambda_{i,\mathcal{A}^{(\ell)}}\equiv M_{i,\mathcal{A}^{(\ell)}}^{-1}\pmod {m_i}$. Consequently, the secret is reconstructed as $s=y_{\ell}\pmod{m_0}$.

\subsection{Security analysis of our scheme 1}\label{Subsec: security of our DHSS}
We will prove the correctness, asymptotic perfectness and asymptotic maximum information rate of our scheme 1 in this subsection.
\begin{theorem}[Correctness]\label{Theorem:Correctness of our scheme1} Any subset
       \[\mathcal{A} \in\Gamma_1=\{\mathcal{A} \subseteq \mathcal{P}: \exists \ell \in [u]~such~that~|\mathcal{A}\cap(\bigcup_{w=1}^{\ell}\mathcal{P}_{w})|\geq t_{\ell}\}.\]
       can reconstruct the secret.
\end{theorem}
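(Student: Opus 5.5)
Fix $\mathcal{A}\in\Gamma_1$ and choose an index $\ell\in[u]$ with $|\mathcal{A}^{(\ell)}|\geq t_\ell$, where $\mathcal{A}^{(\ell)}=\mathcal{A}\cap(\bigcup_{w=1}^{\ell}\mathcal{P}_w)$. The plan has two steps. First, check that every value $s_i^{(\ell)}$, $i\in\mathcal{A}^{(\ell)}$, computed in the Secret Reconstruction Phase equals $y_\ell\pmod{m_i}$. Second, apply Lemma \ref{le: CRT} and the size bound $y_\ell<\prod_{i=1}^{t_\ell}m_i$ to conclude that the CRT combination returns exactly $y_\ell$. Then $s=y_\ell\pmod{m_0}$ follows because $y_\ell=s+\alpha_\ell m_0$ and $0\le s<m_0$.

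\textbf{Step 1: the residues.} This splits into the three cases of the definition of $s_i^{(\ell)}$.
\begin{itemize}
\item[(a)] If $\ell\in[u-1]$ and $i\in\mathcal{A}^{(\ell)}\subseteq[N_\ell]$, the public value $w_i^{(\ell)}=(y_\ell-h_\ell(s_i))\pmod{m_i}$ gives $h_\ell(s_i)+w_i^{(\ell)}\equiv y_\ell\pmod{m_i}$.
\item[(b)] If $\ell=u$ and $i\in[N_{u-1}]$, the same computation with $h_u$ and $w_i^{(u)}$ applies.
\item[(c)] If $\ell=u$ and $i\in[N_{u-1}+1,N_u]$, then $s_i=y_u\pmod{m_i}$ by Step 2 of the Share Generation Phase.
\end{itemize}
In every case $s_i^{(\ell)}\equiv y_\ell\pmod{m_i}$. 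The one point to make explicit is that the relevant $w_i^{(\ell)}$ were actually published for all $i\in\mathcal{A}^{(\ell)}$. This holds because $\mathcal{A}^{(\ell)}\subseteq\bigcup_{w\le\ell}\mathcal{P}_w$ corresponds to indices in $[N_\ell]$, and the dealer publishes $w_i^{(\ell)}$ for every $i\in[N_\ell]$ when $\ell<u$, and for every $i\in[N_{u-1}]$ when $\ell=u$.

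\textbf{Step 2: uniqueness.} The moduli $m_i$, $i\in\mathcal{A}^{(\ell)}$, are pairwise coprime by Step 1(i) of the Share Generation Phase. Lemma \ref{le: CRT} therefore gives that the displayed sum $\sum_i\lambda_{i,\mathcal{A}^{(\ell)}}M_{i,\mathcal{A}^{(\ell)}}s_i^{(\ell)}\pmod{M_{\mathcal{A}^{(\ell)}}}$ is the unique integer $x\in[0,M_{\mathcal{A}^{(\ell)}})$ with $x\equiv y_\ell\pmod{m_i}$ for all $i\in\mathcal{A}^{(\ell)}$. To identify $x$ with $y_\ell$ we need $0\le y_\ell<M_{\mathcal{A}^{(\ell)}}$. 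This is the only genuinely quantitative step, and it uses the ordering $m_1<m_2<\cdots<m_n$. Since $|\mathcal{A}^{(\ell)}|\ge t_\ell$, the product of the $m_i$ over $\mathcal{A}^{(\ell)}$ is at least the product of its $t_\ell$ smallest moduli. Those are $t_\ell$ distinct elements of $\{m_1,\dots,m_n\}$, hence at least $m_1,\dots,m_{t_\ell}$ termwise, and all $m_i>1$. So
\[M_{\mathcal{A}^{(\ell)}}\ \ge\ \prod_{i=1}^{t_\ell}m_i\ >\ y_\ell\ \ge 0,\]
and the CRT output equals $y_\ell$. Reducing modulo $m_0$ yields $s$, which proves correctness.
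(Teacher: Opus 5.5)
Your proof is correct and follows the same route as the paper's: check that each $s_i^{(\ell)}\equiv y_\ell \pmod{m_i}$ via the published $w_i^{(\ell)}$ (or directly from $s_i=y_u \bmod m_i$ for the last level), then use $y_\ell<\prod_{i=1}^{t_\ell}m_i\le M_{\mathcal{A}^{(\ell)}}$ together with Lemma~\ref{le: CRT}, and reduce modulo $m_0$. You additionally spell out two points the paper leaves implicit, namely that the needed $w_i^{(\ell)}$ are actually published for every $i\in\mathcal{A}^{(\ell)}$, and the termwise comparison behind $\prod_{i=1}^{t_\ell}m_i\le M_{\mathcal{A}^{(\ell)}}$.
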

\begin{proof}
Suppose that there is an integer $\ell\in [u]$ such that $\mathcal{A}^{(\ell)}=\mathcal{A}\cap (\cup_{w=1}^{\ell}\mathcal{P}_{w})$ and $|\mathcal{A}^{(\ell)}|\geq t_{\ell}$. Each participant $P_i$ in $\mathcal{A}^{(\ell)}$ computes $s_i^{(\ell)}$ from its own share $s_i$ as follows:
         \[s_i^{(\ell)}=\begin{cases}
           (h_{\ell}(s_i)+w_i^{(\ell)})\pmod{m_i},~\mathrm{if}~\ell\in [u-1], i\in \mathcal{A}^{(\ell)}\subseteq [N_{\ell}],\\
            (h_{u}(s_i)+w_i^{(u)}) \pmod{m_i},~\mathrm{if}~\ell=u, i\in \mathcal{A}^{(\ell)}\cap [N_{u-1}],\\
            s_i,\mathrm{if}~\ell=u, i\in \mathcal{A}^{(\ell)}\cap [N_{u-1}+1,N_u].\\
         \end{cases}\]
Here $h_{\ell}, \ell\in [u]$ are public one-way functions. Moreover, the values
\[w_i^{(\ell)}=(y_{\ell}-h_{\ell}(s_i))\pmod{m_i}, \ell\in [u-1], i\in [N_{\ell}],\]
           and $w_i^{(u)}=(y_{u}-h_{u}(s_i))\pmod{m_i}$, $i\in [N_{u-1}]$ are publicly known. These values give rise to a system of congruences
      \[y_{\ell} \equiv s_{i}^{(\ell)} \pmod{m_{i}}, i\in \mathcal{A}^{(\ell)}.\]

Since $m_{0}<m_{1}<\cdots<m_{n}$ and $|\mathcal{A}^{(\ell)}|\geq t_{\ell}$, then
                \[y_{\ell}<\prod\limits_{i=1}^{t_{\ell}}m_i\leq\prod\limits_{i\in \mathcal{A}^{(\ell)}}m_i.\]
Therefore, by the Chinese Remainder Theorem for integer ring (Lemma~\ref{le: CRT}), the above system has a unique solution
   \[y_{\ell}=\sum\limits_{i\in \mathcal{A}^{(\ell)}}\lambda_{i,\mathcal{A}^{(\ell)}}M_{i,\mathcal{A}^{(\ell)}}s_i^{(\ell)} \pmod {M_{\mathcal{A}^{(\ell)}}},\]
where $M_{\mathcal{A}^{(\ell)}}=\prod\limits_{i\in\mathcal{A}^{(\ell)}}m_i$, $M_{i,\mathcal{A}^{(\ell)}}=M_{\mathcal{A}^{(\ell)}}/m_i$ and $\lambda_{i,\mathcal{A}^{(\ell)}} \equiv M_{i,\mathcal{A}^{(\ell)}}^{-1}\pmod {m_i}$. Consequently, the secret is reconstructed as $s=y_{\ell}\pmod{m_0}$.
\end{proof}

Now we prove the \emph{asymptotic perfectness} of our scheme 1. Let $\mathcal{B}\subset \mathcal{P}$, and
            \[\mathcal{B}\notin \Gamma_1=\{\mathcal{A} \subseteq \mathcal{P}: \exists \ell \in [u]~such~that~|\mathcal{A}\cap(\bigcup_{w=1}^{\ell}\mathcal{P}_{w})|\geq t_{\ell}\}.\]
Consider the worst case, i.e.,
    \[|\mathcal{B}\cap(\cup_{w=1}^{\ell}\mathcal{P}_{w})|=t_{\ell}-1~\mathrm{for~all}~\ell\in [u].\]
The participants in $\mathcal{B}$ have access to their own shares, the upper and lower bounds of $y_{\ell}$ for all $\ell \in [u]$, as well as all publicly available information. Consequently, they attempt to reconstruct the secret by first choosing a tuple $(g_1, g_2, \ldots, g_u)\in \mathbb{Z}^{u}$ that meets the following five conditions, and then calculating $g_u \pmod{m_0}$.
\begin{itemize}
  \item[(i)] For each $\ell\in [u]$, the value $g_{\ell}$ satisfies $0\leq g_{\ell}<\prod\limits_{i=1}^{t_{\ell}}m_i$.
  \item[(ii)] The congruences $g_{1}\equiv g_{2}\equiv\cdots\equiv g_{u}\pmod{m_0}$ hold.
  \item[(iii)] The following congruences are satisfied:
      \[g_{\ell}\equiv(w_i^{(\ell)}+h_{\ell}(s_i))\pmod{m_i}, \ell\in [u-1], i\in \mathcal{B}\cap[N_{\ell}],\]
      and $g_{u}\equiv(w_i^{(u)}+h_{u}(s_i))\pmod{m_i}$, $i\in \mathcal{B}\cap[N_{u-1}]$, that is
             \begin{displaymath}
         \left\{\begin{aligned}
            g_{1}\equiv &(w_i^{(1)}+h_{1}(s_i))\pmod{m_i}~\mathrm{for~all}~i\in \mathcal{B}\cap [N_1],\\
            g_{2}\equiv &(w_i^{(2)}+h_{2}(s_i))\pmod{m_i}~\mathrm{for~all}~i\in \mathcal{B}\cap [N_2],\\
                     &\vdots\\
            g_{u-1}\equiv &(w_i^{(u-1)}+h_{u-1}(s_i))\pmod{m_i}~\mathrm{for~all}~i\in \mathcal{B}\cap [N_{u-1}],\\
            g_{u}\equiv &(w_i^{(u)}+h_{u}(s_i))\pmod{m_i}~\mathrm{for~all}~i\in \mathcal{B}\cap [N_{u-1}].\\
           \end{aligned} \right.
        \end{displaymath}
  \item[(iv)] For $i\in \mathcal{B}\cap [N_{u-1}+1, N_u]$, it holds that $g_{u}\equiv s_i\pmod{m_i}$.
  \item[(v)] For any $i\in [N_{u-1}]$ with $i\notin \mathcal{B}$, suppose $i$ belongs to level $\mathcal{P}_{\ell_1}$. Then there exists an integer $\widetilde{s}_i\in \mathbb{Z}_{m_i}$ such that for all $\ell\in [\ell_1,u]$, the relation $h_{\ell}(\widetilde{s}_i)\equiv(g_{\ell}-w_i^{(\ell)})\pmod{m_i}$ holds.
  \end{itemize}

\begin{lemma}\label{le: loss entropy of DHSS}
Let $\mathcal{V}_{\mathcal{B}}$ denote the set of conditions (i) through (v) that are the knowledge of $\mathcal{B}$. Denote by $\mathcal{V}^\prime_{\mathcal{B}}$ the set consisting of conditions (i) through (iv). For any $\epsilon_1>0$, there exists a positive integer $\sigma_1$ such that whenever $|\mathcal{S}|=m_0>\sigma_1$, the following holds:
                               \[0<\mathsf{H}(\mathbf{S}|\mathbf{V}^\prime_{\mathcal{B}})-\mathsf{H}(\mathbf{S}|\mathbf{V}_{\mathcal{B}})<\epsilon_1,\]
where $\mathbf{V}^\prime_{\mathcal{B}}$ and $\mathbf{V}_{\mathcal{B}}$ are random variables corresponding to $\mathcal{V}^\prime_{\mathcal{B}}$ and $\mathcal{V}_{\mathcal{B}}$, respectively.
\end{lemma}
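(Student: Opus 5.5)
Condition (v) is the only thing separating $\mathcal{V}_{\mathcal{B}}$ from $\mathcal{V}^\prime_{\mathcal{B}}$. The plan is to show that, for every candidate tuple $(g_1,\ldots,g_u)$ allowed by (i)--(iv), condition (v) is satisfied with overwhelming probability (over the randomness of the one-way functions). Then the posterior distribution of $\mathbf{S}$ barely changes when (v) is added. Both the strict lower bound and the $\epsilon_1$ upper bound come from comparing the two posterior distributions over $\mathbb{Z}_{m_0}$.

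\textbf{Step 1: the set of candidates.} First describe the candidate set under $\mathcal{V}^\prime_{\mathcal{B}}$. For each $\ell$, condition (iii)--(iv) fixes $g_\ell$ modulo $\prod_{i\in\mathcal{B}\cap[N_\ell]}m_i$, a product of $t_\ell-1$ moduli. Together with (i), this says $g_\ell\in\mathcal{G}_{\mathcal{B},\ell}$, a set of the form treated in Lemma \ref{le: resutls-Tiplea2018}. Condition (ii) couples the levels through residues modulo $m_0$.

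For each $s\in\mathbb{Z}_{m_0}$, the number of compatible tuples is $\prod_{\ell}|\Psi_{\mathcal{B},\ell}^{-1}(s)|$. By Lemma \ref{le: resutls-Tiplea2018}(2),(3), each factor equals $k-1$ or $k$ plus $a_s\in\{0,1\}$, and for all but an $o(m_0)$ fraction of $s$ the factors take a common value. This gives $\mathsf{H}(\mathbf{S}|\mathbf{V}^\prime_{\mathcal{B}})$ explicitly, via the same computation as in Lemma \ref{le: resutls-Tiplea2018}(4) applied to the product counts.

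\textbf{Step 2: the effect of (v).} Next, for an outsider index $i\notin\mathcal{B}$ at level $\ell_1$, condition (v) requires some $\widetilde{s}_i\in\mathbb{Z}_{m_i}$ with $h_\ell(\widetilde{s}_i)\equiv g_\ell-w_i^{(\ell)}\pmod{m_i}$ for all $\ell\in[\ell_1,u]$.

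Model the $h_\ell$ as random functions, in the standard idealization of one-way functions. For the true tuple, (v) always holds. For a false tuple, the $u-\ell_1+1$ congruences are matched by a given $\widetilde{s}_i$ with probability about $m_i^{-(u-\ell_1+1)}$. Over $m_i$ choices of $\widetilde{s}_i$, the expected number of witnesses is $m_i^{-(u-\ell_1)}$. So, when $\ell_1<u$, a false tuple survives (v) with probability $O(1/m_i)$, while at level $u$ the condition is essentially vacuous.

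The point to verify is that (v) deletes candidate values $g_\ell$ uniformly across residue classes modulo $m_0$. By the symmetry of the random oracle model, each surviving count $|\Psi_{\mathcal{B},\ell}^{-1}(s)|$ is rescaled by a factor independent of $s$, up to fluctuations. Conditioning on (v) therefore multiplies every class count by roughly the same factor. The posterior is then a reweighting of the Step 1 distribution whose weights converge to a constant as $m_0\to\infty$, because $m_i>km_0$ and the counts per class are bounded.

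\textbf{Step 3: the two bounds.} Finally, bound the difference of the two entropies by an expression of the form in Lemma \ref{le: resutls-Tiplea2018}(4), with $\delta_0/m_0,\delta_1/m_0$ tending to $0$ or $1$. This difference goes to $0$, which supplies $\sigma_1$.

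Strict positivity follows because conditioning on extra information cannot increase entropy, and (v) is not independent of $\mathbf{S}$: it eliminates at least one false tuple with positive probability, which breaks exact uniformity.

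\textbf{Main obstacle.} The hardest step is Step 2, justifying that the elimination caused by (v) is essentially independent of the residue modulo $m_0$. This needs a random-oracle heuristic for the one-way functions $h_\ell$, plus a concentration argument when the per-class counts are bounded by $k$. I would first try the crude bound: every class retains the true-or-uniform structure except for $O(m_0^{\theta})$ classes, and the remainder is absorbed exactly as in Lemma \ref{le: resutls-Tiplea2018}(3).
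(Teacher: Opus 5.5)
\textbf{The gap is in Step 2, and it is fatal.} Your own estimate there contradicts the plan in your first paragraph.

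Every index constrained by (v) lies in $[N_{u-1}]$, so its level satisfies $\ell_1\le u-1$. The ``vacuous at level $u$'' case therefore never occurs. Moreover, for $u\ge 2$ the worst-case $\mathcal{B}$ always leaves an outsider in $\mathcal{P}_1$, since $|\mathcal{B}\cap\mathcal{P}_1|=t_1-1<t_1\le n_1$. By your computation, a false tuple thus survives (v) with probability $O(m_i^{-(u-\ell_1)})\le O(1/m_0)$, because $m_i>km_0$. That is overwhelming elimination, not negligible elimination.

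The ``uniform rescaling'' you then invoke does not follow. Symmetry of the random oracle gives equal \emph{expected} survival rates across residue classes, but the posterior entropy depends on the realized counts. Each class carries at most $(k+1)^u$ tuples (Lemma \ref{le:bound}(1)), so there is nothing to concentrate. The expected number of surviving false tuples over all $m_0$ classes is $O\bigl((k+1)^u m_0/m_i\bigr)=O(1)$. The true tuple always survives (take $\widetilde{s}_i=c_i$). Let $X$ be the number of surviving false tuples. Then, in your model, $\mathsf{H}(\mathbf{S}\mid\mathbf{V}_{\mathcal{B}})\le\mathbb{E}\log_2(1+X)\le\log_2(1+\mathbb{E}X)=O(1)$. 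Your reweighting factors tend to $0$ off the true class, not to a constant. The difference in the lemma is then about $\log_2 m_0$, which diverges, so Step 3 cannot be carried out.

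The paper's proof is just the heuristic in your opening line: since the $c_i$ are uniform and the $h_\ell$ are distinct, (v) eliminates a candidate only with negligible probability. Your Step 2 shows why this cannot be justified for Shannon entropy. For an outsider at level $\ell_1$, the $u-\ell_1+1\ge 2$ published values $w_i^{(\ell)}$ depend on a single unknown $c_i\in\mathbb{Z}_{m_i}$. Whatever the $h_\ell$ are, they confine $(y_{\ell_1},\dots,y_u)\bmod m_i$ to at most $m_i$ of the $m_i^{u-\ell_1+1}$ residue vectors. For random-oracle $h_\ell$, this pins the secret down to $O(1)$ candidates.

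So what is missing is not a concentration argument. One-wayness is a computational property: it says the coalition cannot efficiently \emph{test} (v), and it says nothing about $\mathsf{H}(\mathbf{S}\mid\mathbf{V}_{\mathcal{B}})$. To make your approach work, you would have to replace the entropy statement by a computational one, such as indistinguishability against efficient adversaries, rather than repair Step 2.
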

\begin{proof}
Since the dealer selects the shares $s_i$(for $i\in [N_{u-1}]$) uniformly at random and the one-way functions $h_{\ell}(\cdot), \ell\in [u]$ are distinct, then the condition (v) eliminates a candidate tuple $(g_1,g_2,\ldots,g_u)\in \mathbb{Z}^u$ with negligible probability, as long as the secret space $|\mathcal{S}|$ is sufficiently large. Hence, for sufficiently large $|\mathcal{S}|$, the difference in entropy
$\mathsf{H}(\mathbf{S}|\mathbf{V}^\prime_{\mathcal{B}})-\mathsf{H}(\mathbf{S}|\mathbf{V}_{\mathcal{B}})$ becomes negligible. This completes the proof.
\end{proof}

Let us define
\begin{equation}\label{Eq: definiton F}
\mathcal{G}_1=\{(g_1,g_2,\ldots,g_u)\in \mathbb{Z}^u: ~\text{conditions (i)~through~(iv) hold}\}.
\end{equation}
Based on Lemma \ref{le: preimage}, we give the size of $\mathcal{G}_1$ in Theorem \ref{th: cardinality of G}. By combining Lemmas \ref{le: The limits} and \ref{le:bound}, we evaluate the conditional entropy $\mathsf{H}(\mathbf{S}|\mathbf{V}^\prime_{\mathcal{B}})$ and consequently prove that our scheme 1 achieves asymptotic perfectness in Theorem \ref{Th: our perfectness1}.

\begin{lemma}\label{le: preimage}
Define the mapping $\Psi_1$ by
 \[\Psi_1: \mathcal{G}_1\mapsto \mathcal{S}, (g_1,g_2,\ldots,g_u)\mapsto g_u\pmod{m_0}.\]
 For any $s\in \mathcal{S}$, let
           \[\Psi_1^{-1}(s)=\{(g_1,g_2,\ldots,g_u)\in \mathcal{G}_1: g_{u}\equiv s\pmod{m_0}\},\]
then the size of the set $\Psi_1^{-1}(s)$ can be written as
            \[|\Psi_1^{-1}(s)|=\prod_{\ell=1}^u \left(\left\lfloor{\prod\limits_{i=1}^{t_{\ell}}m_i}\middle/{m_0\prod\limits_{i\in\mathcal{B}^{(\ell)}}m_i}\right\rfloor + a_{s\ell}\right),\]
where $\mathcal{B}^{(\ell)}=\mathcal{B}\cap [N_{\ell}]$, and $a_{s\ell} \in \{0,1\}$ is a function of $s$ and $\ell$.
\end{lemma}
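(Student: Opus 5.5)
Fix $s\in\mathcal{S}$. A tuple lies in $\Psi_1^{-1}(s)$ exactly when it satisfies conditions (i), (iii), (iv) and $g_u\equiv s\pmod{m_0}$. Because of condition (ii), this last congruence forces $g_\ell\equiv s\pmod{m_0}$ for every $\ell\in[u]$. The plan is to check that, once $s$ is fixed, the constraints on $g_\ell$ involve only $g_\ell$. Condition (i) bounds each $g_\ell$ separately. The congruences in (iii) and (iv) for index $\ell$ involve only $g_\ell$ and the moduli $m_i$ with $i\in\mathcal{B}^{(\ell)}$. For $\ell=u$, note that $\mathcal{B}^{(u)}=(\mathcal{B}\cap[N_{u-1}])\cup(\mathcal{B}\cap[N_{u-1}+1,N_u])$, so (iii) and (iv) together give one congruence per $i\in\mathcal{B}^{(u)}$. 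Hence
\[\Psi_1^{-1}(s)=\prod_{\ell=1}^{u}\Psi^{-1}_{\ell}(s),\qquad \Psi^{-1}_{\ell}(s)=\Bigl\{g_\ell\in\mathbb{Z}:0\le g_\ell<\prod_{i=1}^{t_\ell}m_i,\ g_\ell\equiv s\ (\mathrm{mod}\ m_0),\ g_\ell\equiv r_i^{(\ell)}\ (\mathrm{mod}\ m_i),\ i\in\mathcal{B}^{(\ell)}\Bigr\},\]
where $r_i^{(\ell)}$ denotes the public residue $w_i^{(\ell)}+h_\ell(s_i)$, or $s_i$ for the top-level indices. The cardinality of $\Psi_1^{-1}(s)$ is then the product of the individual cardinalities.

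For each fixed $\ell$, the moduli $m_0$ and $m_i$, $i\in\mathcal{B}^{(\ell)}$, are pairwise coprime. By Lemma~\ref{le: CRT}, the system of congruences is therefore equivalent to a single condition $g_\ell\equiv x_\ell\pmod{Q_\ell}$, where $Q_\ell=m_0\prod_{i\in\mathcal{B}^{(\ell)}}m_i$ and $x_\ell$ depends on $s$ and the public data. Write $R_\ell=\prod_{i=1}^{t_\ell}m_i$. Counting the integers in $[0,R_\ell)$ lying in one residue class mod $Q_\ell$ gives either $\lfloor R_\ell/Q_\ell\rfloor$ or $\lfloor R_\ell/Q_\ell\rfloor+1$. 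More precisely, write $R_\ell=qQ_\ell+r$ with $0\le r<Q_\ell$. Each of the $q$ complete blocks of length $Q_\ell$ contains exactly one representative of the class. The final partial block contains one more exactly when the least nonnegative representative $x_\ell \bmod Q_\ell$ is less than $r$. This defines $a_{s\ell}\in\{0,1\}$, and $a_{s\ell}$ depends on $s$ and $\ell$ (and on the fixed public data).

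Multiplying over $\ell$ gives the stated formula. This is the same counting argument as in part (2) of Lemma~\ref{le: resutls-Tiplea2018}, applied level by level.

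The only step requiring care is the decoupling. Condition (ii) must not introduce any cross-constraint between different $g_\ell$ beyond all being congruent to $s$ modulo $m_0$. This is immediate once $s$ is fixed. One also has to confirm that no condition among (iii) and (iv) mixes levels, which holds because each congruence is stated for a single $g_\ell$. Finally, the worst-case assumption $|\mathcal{B}^{(\ell)}|=t_\ell-1$, together with the $k$-compactness of the sequence, guarantees $Q_\ell\le R_\ell$, so the floor is a positive integer. The counting argument itself does not actually need this.
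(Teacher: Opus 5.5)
Your argument is correct and is essentially the paper's proof: once $s$ is fixed the levels decouple, and the congruences at level $\ell$ collapse by Lemma~\ref{le: CRT} to one residue class modulo $m_0\prod_{i\in\mathcal{B}^{(\ell)}}m_i$. That class has $\lfloor\cdot\rfloor$ or $\lfloor\cdot\rfloor+1$ representatives in $[0,\prod_{i=1}^{t_\ell}m_i)$, and multiplying over $\ell$ gives the formula. One correction to your closing aside, which the count indeed does not use: for $k=1$ (the asymptotically ideal case) $k$-compactness does not give $Q_\ell\le R_\ell$. For example, with $t_\ell=2$ and $\mathcal{B}^{(\ell)}=\{j\}$, $j\ge 3$, taking $m_1,m_2$ just above $m_0$ and $m_j$ near $m_0+m_0^{\theta}$ gives $m_1m_2<m_0m_j$, so the floor may be $0$.
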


\begin{proof}
For any $s\in \mathcal{S}$, and $(g_1,g_2,\ldots,g_u)\in \Psi_1^{-1}(s)$, it holds that
       \begin{displaymath}\label{Eq: congruence equations}
         \left\{\begin{aligned}
            g_{1}\equiv &g_{2}\equiv\cdots\equiv g_{u}\equiv s\pmod{m_0},\\
            g_{1}\equiv &s_i^{(1)}\pmod{m_i}~\mathrm{for~all}~i\in \mathcal{B}\cap [N_1],\\
            g_{2}\equiv &s_i^{(2)}\pmod{m_i}~\mathrm{for~all}~i\in \mathcal{B}\cap [N_2],\\
                     &\vdots\\
            g_{u-1}\equiv&s_i^{(u-1)}\pmod{m_i}~\mathrm{for~all}~i\in \mathcal{B}\cap [N_{u-1}],\\
            g_{u}\equiv &s_i^{(u)}\pmod{m_i}~\mathrm{for~all}~i\in \mathcal{B}\cap [N_u].\\
           \end{aligned} \right.
       \end{displaymath}
where
     \[s_i^{(\ell)}=\begin{cases}
           (h_{\ell}(s_i)+w_i^{(\ell)})\pmod{m_i},~\mathrm{if}~\ell\in [u-1], i\in \mathcal{B}^{(\ell)}\subseteq [N_{\ell}],\\
            (h_{u}(s_i)+w_i^{(u)}) \pmod{m_i},~\mathrm{if}~\ell=u, i\in \mathcal{B}^{(\ell)}\cap [N_{u-1}],\\
            s_i,\mathrm{if}~\ell=u, i\in \mathcal{B}^{(\ell)}\cap [N_{u-1}+1,N_u],\\
         \end{cases}\]
By Lemma \ref{le: CRT}, it holds that
    \[g_{\ell}\equiv\sum\limits_{i\in \mathcal{\widetilde{B}}^{(\ell)}}\lambda_{i,\mathcal{\widetilde{B}}^{(\ell)}}M_{i,\mathcal{\widetilde{B}}^{(\ell)}}s_i^{(\ell)}\pmod {M_{\mathcal{\widetilde{B}}^{(\ell)}}},\ell\in [u],\]
where $\mathcal{\widetilde{B}}^{(\ell)}=\{0\}\cup(\mathcal{B}\cap [N_{\ell}])=\{0\}\cup\mathcal{B}^{(\ell)}$, $s_0^{(\ell)}=s$,
$M_{\mathcal{\widetilde{B}}^{(\ell)}}=\prod\limits_{i\in\mathcal{\widetilde{B}}^{(\ell)}}m_i$, $M_{i,\mathcal{\widetilde{B}}^{(\ell)}}=M_{\mathcal{\widetilde{B}}^{(\ell)}}/m_i$, and $\lambda_{i,\mathcal{\widetilde{B}}^{(\ell)}}\equiv M_{i,\mathcal{\widetilde{B}}^{(\ell)}}^{-1}\pmod {m_i}$.

Denote by $g_{\mathcal{\widetilde{B}}^{(\ell)}}=\sum\limits_{i\in \mathcal{\widetilde{B}}^{(\ell)}}\lambda_{i,\mathcal{\widetilde{B}}^{(\ell)}}M_{i,\mathcal{\widetilde{B}}^{(\ell)}}s_i^{(\ell)}\pmod {M_{\mathcal{\widetilde{B}}^{(\ell)}}}$, then
    \[g_{\mathcal{\widetilde{B}}^{(\ell)}}<M_{\mathcal{\widetilde{B}}^{(\ell)}}=\prod\limits_{i\in \mathcal{\widetilde{B}}^{(\ell)}}m_i,\]
and
      \begin{equation}\label{Eq: preimage F}
     g_{\ell}\equiv g_{\mathcal{\widetilde{B}}^{(\ell)}}\pmod{M_{\mathcal{\widetilde{B}}^{(\ell)}}}.
     \end{equation}
That is, there exists a nonnegative integer  $k_{\mathcal{\widetilde{B}}^{(\ell)}}$ such that
     \[g_{\ell}=g_{\mathcal{\widetilde{B}}^{(\ell)}}+k_{\mathcal{\widetilde{B}}^{(\ell)}} M_{\mathcal{\widetilde{B}}^{(\ell)}}.\]

On the one hand, $g_{\ell}$ satisfies $0\leq g_{\ell} <\prod\limits_{i=1}^{t_{\ell}}m_i$ in expression \eqref{Eq: definiton F}. Since $0\leq g_{\mathcal{\widetilde{B}}^{(\ell)}}<M_{\mathcal{\widetilde{B}}^{(\ell)}}$, it holds that
    \[0\leq k_{\mathcal{\widetilde{B}}^{(\ell)}}<\frac{\prod\limits_{i=1}^{t_{\ell}}m_i-g_{\mathcal{\widetilde{B}}^{(\ell)}}}{M_{\mathcal{\widetilde{B}}^{(\ell)}}}.\]
Let $\eta_{\mathcal{\widetilde{B}}^{(\ell)}}$ be the number of all possible $k_{\mathcal{\widetilde{B}}^{(\ell)}}$, then
        \begin{displaymath}
         \begin{aligned}
         \eta_{\mathcal{\widetilde{B}}^{(\ell)}}
         &=\begin{cases}
            \left\lfloor\left.{\prod\limits_{i=1}^{t_{\ell}}m_i}\right/{M_{\mathcal{\widetilde{B}}^{(\ell)}}}\right\rfloor,
                        ~\mathrm{if}~g_{\mathcal{\widetilde{B}}^{(\ell)}}\geq \left(\prod\limits_{i=1}^{t_{\ell}}m_i\pmod {M_{\mathcal{\widetilde{B}}^{(\ell)}}}\right),\\
            \left(\left\lfloor\left.{\prod\limits_{i=1}^{t_{\ell}}m_i}\right/{M_{\mathcal{\widetilde{B}}^{(\ell)}}}\right\rfloor
                        +1\right),~\mathrm{if}~g_{\mathcal{\widetilde{B}}^{(\ell)}}< \left(\prod\limits_{i=1}^{t_{\ell}}m_i\pmod {M_{\mathcal{\widetilde{B}}^{(\ell)}}}\right).
            \end{cases}\\
         &=\begin{cases}
            \left\lfloor\left.{\prod\limits_{i=1}^{t_{\ell}}m_i}\right/{\prod\limits_{i\in\mathcal{\widetilde{B}}^{(\ell)}}m_i}\right\rfloor,
                        ~\mathrm{if}~g_{\mathcal{\widetilde{B}}^{(\ell)}}\geq \left(\prod\limits_{i=1}^{t_{\ell}}m_i\pmod {M_{\mathcal{\widetilde{B}}^{(\ell)}}}\right),\\
            \left(\left\lfloor\left.{\prod\limits_{i=1}^{t_{\ell}}m_i}\right/{\prod\limits_{i\in\mathcal{\widetilde{B}}^{(\ell)}}m_i}\right\rfloor
                        +1\right),~\mathrm{if}~g_{\mathcal{\widetilde{B}}^{(\ell)}}< \left(\prod\limits_{i=1}^{t_{\ell}}m_i\pmod {M_{\mathcal{\widetilde{B}}^{(\ell)}}}\right).
            \end{cases}\\
         &=\left\lfloor\left.{\prod\limits_{i=1}^{t_{\ell}}m_i}\right/{m_0\prod\limits_{i\in\mathcal{B}^{(\ell)}}m_i}\right\rfloor+a_{s\ell},
                  \end{aligned}
       \end{displaymath}
where
  \begin{displaymath}
        a_{s\ell}=\begin{cases}
            0,~\mathrm{if}~g_{\mathcal{\widetilde{B}}^{(\ell)}}\geq \left(\prod\limits_{i=1}^{t_{\ell}}m_i\pmod {M_{\mathcal{\widetilde{B}}^{(\ell)}}}\right),\\
            1,~\mathrm{if}~g_{\mathcal{\widetilde{B}}^{(\ell)}}< \left(\prod\limits_{i=1}^{t_{\ell}}m_i\pmod {M_{\mathcal{\widetilde{B}}^{(\ell)}}}\right),
            \end{cases}\\
       \end{displaymath}
is a function of $s$ and $\ell$. On the other hand, different choices for nonnegative integer vector $(k_{\mathcal{\widetilde{B}}^{(1)}},k_{\mathcal{\widetilde{B}}^{(2)}}\ldots,k_{\mathcal{\widetilde{B}}^{(u)}})$ correspond to different $(g_1,g_2,\ldots,g_u)$ satisfying $0\leq g_{\ell}<\prod\limits_{i=1}^{t_{\ell}}m_i, \ell\in [u]$ and expression \eqref{Eq: preimage F}, i.e., $(g_1,g_2,\ldots,g_u)\in \mathcal{G}_1$. Therefore, \[|\Psi_1^{-1}(s)|=\prod_{\ell=1}^u \left(\left\lfloor\left.{\prod\limits_{i=1}^{t_{\ell}}m_i}\right/{m_0\prod\limits_{i\in\mathcal{B}^{(\ell)}}m_i}\right\rfloor + a_{s\ell}\right),\]
where $\mathcal{B}^{(\ell)}=\mathcal{B}\cap [N_{\ell}]$, and $a_{s\ell} \in \{0,1\}$ is a function of $s$ and $\ell$.
\end{proof}

\begin{theorem}\label{th: cardinality of G}
Denote by $\Omega_q=|\Psi_1^{-1}(s)|$, where $q=a_{s1}+2a_{s2}+2^2a_{s3}+\cdots+2^{u-1}a_{su}\in \mathbb{Z}_{2^u}$. Let $\delta_q$ be the size of the set
       \[\{s\in \mathcal{S}:|\Psi_1^{-1}(s)|=\Omega_q\}.\]
then $\delta_0,\delta_1,\ldots,\delta_{2^{u}-1}$ are non-negative integers, $\delta_0+\delta_1+\cdots+\delta_{2^{u}-1}=m_0$, and the size of $\mathcal{G}_1$ can be written as
 \[|\mathcal{G}_1|=\delta_0\Omega_0+\delta_1\Omega_1+\delta_2\Omega_2+\cdots+\delta_{2^{u}-1}\Omega_{2^{u}-1}.\]
\end{theorem}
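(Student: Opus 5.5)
The plan is to partition $\mathcal{G}_1$ according to the fibres of $\Psi_1$ and then regroup the fibres by their size, using Lemma \ref{le: preimage}.

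\emph{Step 1: partition $\mathcal{G}_1$ into fibres.} Every tuple $(g_1,\ldots,g_u)\in\mathcal{G}_1$ has a unique image $g_u \pmod{m_0}\in\mathcal{S}=\mathbb{Z}_{m_0}$. Hence the sets $\Psi_1^{-1}(s)$, $s\in\mathcal{S}$, are pairwise disjoint and their union is $\mathcal{G}_1$. This gives
\[|\mathcal{G}_1|=\sum_{s\in\mathcal{S}}|\Psi_1^{-1}(s)|.\]

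\emph{Step 2: each fibre size is some $\Omega_q$.} By Lemma \ref{le: preimage}, every $|\Psi_1^{-1}(s)|$ equals
\[\prod_{\ell=1}^u\left(\left\lfloor\prod_{i=1}^{t_\ell}m_i\middle/m_0\prod_{i\in\mathcal{B}^{(\ell)}}m_i\right\rfloor+a_{s\ell}\right)\]
with $a_{s\ell}\in\{0,1\}$. The floor terms do not depend on $s$, so the fibre size is determined entirely by the bit vector $(a_{s1},\ldots,a_{su})$.

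The binary encoding $q=\sum_{\ell=1}^u 2^{\ell-1}a_{s\ell}$ is a bijection between $\{0,1\}^u$ and $\mathbb{Z}_{2^u}$. So each $s$ determines a unique index $q(s)\in\mathbb{Z}_{2^u}$, and $|\Psi_1^{-1}(s)|=\Omega_{q(s)}$. Here $\Omega_q$ is well defined as the product with the bits of $q$ substituted for the $a_{s\ell}$.

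\emph{Step 3: interpreting $\delta_q$ (the main obstacle).} The one subtle point is how to read $\delta_q$. Different $q$ could give equal numerical values $\Omega_q$, for instance when some floor term is $0$ or when two products coincide. Read literally, ``$\{s:|\Psi_1^{-1}(s)|=\Omega_q\}$'' would then count some $s$ twice.

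I would therefore interpret $\delta_q$ as $|\{s\in\mathcal{S}: q(s)=q\}|$, i.e. the number of $s$ whose bit vector is $(a_{s1},\ldots,a_{su})$ equal to the bits of $q$. This is the intended meaning, and it agrees with the literal one whenever the $\Omega_q$ are distinct. With this reading the classes $\{s: q(s)=q\}$ partition $\mathcal{S}$. The $\delta_q$ are then non-negative integers with $\sum_q\delta_q=|\mathcal{S}|=m_0$.

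\emph{Step 4: regroup the sum.} Grouping the sum from Step 1 by the value of $q(s)$ gives
\[|\mathcal{G}_1|=\sum_{q=0}^{2^u-1}\sum_{s:\,q(s)=q}\Omega_q=\sum_{q=0}^{2^u-1}\delta_q\Omega_q,\]
which is the claimed formula.
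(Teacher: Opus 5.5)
Your proposal is correct and is essentially the paper's argument: write $|\mathcal{G}_1|=\sum_{s\in\mathcal{S}}|\Psi_1^{-1}(s)|$, use Lemma~\ref{le: preimage} to see that each fibre size depends only on the bit vector $(a_{s1},\ldots,a_{su})$ encoded by $q$, and regroup the sum by $q$. Your Step 3 reading of $\delta_q$ as $|\{s\in\mathcal{S}: q(s)=q\}|$ is exactly what the paper's proof assumes implicitly (``the value $q$ is unique if the secret $s$ is given''), and your caveat is justified: whenever two levels have equal floor terms, $\Omega_q$ is symmetric in the corresponding bits, so distinct $q$ can share the same $\Omega_q$ and the literal definition would break $\sum_q\delta_q=m_0$.
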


\begin{proof}
For $s\in \mathcal{S}$ and $\ell\in [u]$, it has that $a_{s\ell}\in \{0,1\}$, and $q=a_{s1}+2a_{s2}+2^2a_{s3}+\cdots+2^{u-1}a_{su}\in \mathbb{Z}_{2^u}$. According to Lemma \ref{le: preimage}, the value $q$ is unique if the secret $s$ is given. As a result, $\delta_0+\delta_1+\cdots+\delta_{2^{u}-1}=|\mathcal{S}|=m_0$, and
                \[|\mathcal{G}_1|=\sum_{s\in \mathcal{S}}|\Psi_1^{-1}(s)|=\sum_{q\in \mathbb{Z}_{2^u}}\delta_q\Omega_q=\delta_0\Omega_0+\delta_1\Omega_1+\delta_2\Omega_2+\cdots+\delta_{2^{u}-1}\Omega_{2^{u}-1}.\]
\end{proof}

\begin{lemma}\label{le: The limits}
For all $\ell \in [u]$, it holds that
            \[\lim\limits_{m_0\to\infty}\frac{\prod\limits_{i=1}^{t_{\ell}}m_i}{m_0\prod\limits_{i\in\mathcal{B}^{(\ell)}}m_i}=k.\]
Moreover, there is an integer $b_1\in \mathbb{Z}_{2^u}$ such that
     \[\lim\limits_{m_0\to\infty}\frac{\delta_{b_1}}{m_0}=1, \lim\limits_{m_0\to\infty}\frac{{\sum\limits_{q\in \mathbb{Z}_{2^u},q\neq {b_1}}}\delta_q}{m_0}=0.\]
\end{lemma}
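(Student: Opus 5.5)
For the first claim, I fix $\ell\in[u]$ and use the worst case $|\mathcal{B}^{(\ell)}|=t_\ell-1$. The ratio $\prod_{i=1}^{t_\ell}m_i/(m_0\prod_{i\in\mathcal{B}^{(\ell)}}m_i)$ has $t_\ell$ factors in the numerator and $t_\ell$ factors ($m_0$ together with $t_\ell-1$ moduli) in the denominator. By the $k$-compactness condition (iii), every $m_i$ with $i\ge1$ lies in $(km_0,km_0+m_0^{\theta})$. This gives the bounds
\[
\frac{(km_0)^{t_\ell}}{m_0(km_0+m_0^{\theta})^{t_\ell-1}}<\frac{\prod_{i=1}^{t_\ell}m_i}{m_0\prod_{i\in\mathcal{B}^{(\ell)}}m_i}<\frac{(km_0+m_0^{\theta})^{t_\ell}}{m_0(km_0)^{t_\ell-1}} .
\]
Dividing through by $m_0^{t_\ell}$, both bounds tend to $k$, because $m_0^{\theta-1}\to0$ and $t_\ell$ is fixed. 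This is exactly the argument behind Lemma \ref{le: resutls-Tiplea2018}(3), applied level by level. I will also record a sharper statement, since the second part needs it: the ratio lies in $(k-Cm_0^{\theta-1},\,k+Cm_0^{\theta-1})$ for a constant $C$ depending only on $k$ and $t_\ell$.

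For the second claim, the plan is to show that for each $\ell$ the bit $a_{s\ell}$ is constant for all but $o(m_0)$ secrets $s$. Then $b_1$ is the value of $q$ built from these typical bits, and the union over the $u$ levels of the exceptional sets still has size $o(m_0)$. I analyse $a_{s\ell}$ using its definition in the proof of Lemma \ref{le: preimage}. As $s$ ranges over $\mathbb{Z}_{m_0}$ with the other residues fixed, $g_{\widetilde{\mathcal{B}}^{(\ell)}}$ runs over the residue class fixed by $\mathcal{B}^{(\ell)}$, modulo $M_{\widetilde{\mathcal{B}}^{(\ell)}}$. Since $M_{\widetilde{\mathcal{B}}^{(\ell)}}=m_0\cdot\prod_{\mathcal{B}^{(\ell)}}m_i$, these values are spaced exactly $\prod_{\mathcal{B}^{(\ell)}}m_i$ apart over $[0,M_{\widetilde{\mathcal{B}}^{(\ell)}})$. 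Hence the number of $s$ with $a_{s\ell}=1$ is, up to $\pm1$, $r_\ell/\prod_{\mathcal{B}^{(\ell)}}m_i$, where $r_\ell=\prod_{i=1}^{t_\ell}m_i \bmod M_{\widetilde{\mathcal{B}}^{(\ell)}}$.

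Writing $x_\ell=\prod_{i=1}^{t_\ell}m_i/\prod_{\mathcal{B}^{(\ell)}}m_i$, the same quantity equals $m_0\{x_\ell/m_0\}$, where $\{\cdot\}$ denotes the fractional part. By the sharper bound, $x_\ell/m_0=k+O(m_0^{\theta-1})$. I then split into the two cases of Lemma \ref{le: resutls-Tiplea2018}(3):
\begin{itemize}
\item[(i)] if $x_\ell/m_0<k$, then $\{x_\ell/m_0\}\to1$, so almost all $s$ have $a_{s\ell}=1$;
\item[(ii)] if $x_\ell/m_0\ge k$, then $\{x_\ell/m_0\}\to0$, so almost all $s$ have $a_{s\ell}=0$.
\end{itemize}
In either case, the proportion of $s$ with the atypical bit at level $\ell$ is $O(m_0^{\theta-1})+O(1/m_0)\to0$.

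\textbf{Main obstacle.} The delicate point is that these two cases may alternate along the sequence $m_0\to\infty$, so a single $b_1$ need not work along the whole sequence. I would handle this as Lemma \ref{le: resutls-Tiplea2018}(3) does: for each $m_0$ I define $b_1=b_1(m_0)$ by the typical bits. Then $\delta_{b_1}\ge m_0-\sum_{\ell}\#\{s: a_{s\ell}\text{ atypical}\}=m_0-o(m_0)$. This gives $\delta_{b_1}/m_0\to1$, and because $\sum_q\delta_q=m_0$ by Theorem \ref{th: cardinality of G}, the remaining $\delta_q$ sum to $o(m_0)$. Alternatively, I can pass to subsequences on which each level stays in a fixed case; that is the reading I would adopt.

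A second point I would need to check is that $k$ is an integer, so that the floor in the count is indeed $k$ or $k-1$. This is implicit in the paper's use of the same two-case analysis.
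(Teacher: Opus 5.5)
Your proposal is correct and follows the paper's route. The ratio tends to $k$ by $k$-compactness. Each bit $a_{s\ell}$ takes a single value for all but $o(m_0)$ secrets: the paper gets this by citing Lemma~\ref{le: resutls-Tiplea2018}(3) for the subsequences $L_\ell$, while you rederive it directly from the definition of $a_{s\ell}$ via the equally spaced CRT lifts. Combining the $u$ levels by a union bound then singles out $b_1$, exactly as the paper's intersection $\bigcap_{\ell}\mathcal{D}_\ell=\{b_1\}$ does.

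Your two caveats are genuine refinements that the paper leaves implicit. First, the two-case analysis needs $k$ to be an integer; otherwise the proportion of $s$ with $a_{s\ell}=1$ tends to $k-\lfloor k\rfloor\in(0,1)$ and the second claim fails. Second, the paper's ``constants'' $b_{1\ell}$ may in general depend on $m_0$, so $b_1=b_1(m_0)$ is the right reading.
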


\begin{proof}
According to Definition \ref{def:k-compact sequences} of the $k$-compact sequence of co-primes, it is easy to check that for any $i,j\in [n]$,
                           \[\lim\limits_{m_0\to\infty}\frac{m_{i}}{m_{0}}=k,~\mathrm{and}~\lim\limits_{m_0\to\infty}\frac{m_{i}}{m_{j}}=1.\]
Recall that $|\mathcal{B}\cap(\cup_{w=1}^{\ell}\mathcal{P}_{w})|=|\mathcal{B}^{(\ell)}|=t_{\ell}-1~\mathrm{for~all}~\ell\in [u]$. Let $\mathcal{B}^{(\ell)}=\{i_1,i_2,\ldots,i_{t_{\ell}}-1\}$, then
     \[\lim\limits_{m_0\to\infty}\frac{\prod\limits_{i=1}^{t_{\ell}}m_i}{m_0\prod\limits_{i\in\mathcal{B}^{(\ell)}}m_i}
       =\lim\limits_{m_0\to\infty}\frac{m_1}{m_0} \times \lim\limits_{m_0\to\infty}\frac{m_2}{m_{i_{1}}} \times \cdots \times \lim\limits_{m_0\to\infty}\frac{m_{t_{\ell}}}{m_{i_{t_{\ell}-1}}}=k.\]

Since $L=\{m_0,m_1,m_2,\ldots,m_n\}$ is a $k$-compact sequence of co-primes, then its subsequences $L_{\ell}=\{m_0,m_1,m_2,\ldots,m_{N_{\ell}}\}, \ell\in [u]$ are also $k$-compact.    According to the item (3) in Lemma \ref{le: resutls-Tiplea2018}, there exist constants $b_{11},b_{12},\ldots,b_{1u}\in\mathbb{Z}_{2}$ and $u$ sets
          \begin{displaymath}
             \begin{aligned}
             \mathcal{D}_{\ell}=\{&q\in \mathbb{Z}_{2^{u}}: q=a_{q1}+2a_{q2}+2^2a_{q3}+\cdots+2^{u-1}a_{qu},~\mathrm{and}\\
                                   &a_{q\ell}=b_{1\ell}~\mathrm{is~constant}\}, \ell\in [u],
             \end{aligned}
          \end{displaymath}
such that
                            \[\lim\limits_{m_0\to\infty}\frac{\sum\limits_{q\in \mathcal{D}_{\ell}}\delta_q}{m_0}=1, \lim\limits_{m_0\to\infty}\frac{{\sum\limits_{q\in \mathbb{Z}_{2^u},q\notin \mathcal{D}_\ell}}\delta_q}{m_0}=0, \ell\in [u].\]
If $q\in \mathbb{Z}_{2^u}$ and $q\notin \mathcal{D}_{\ell}$, then $0\leq\frac{\delta_q}{m_0}\leq\frac{{\sum\limits_{q\in \mathbb{Z}_{2^u},q\notin \mathcal{D}_\ell}}\delta_q}{m_0}$. Consequently,
                         \begin{equation}\label{eq:qnumber/m0=0}
                         \lim\limits_{m_0\to\infty}\frac{\delta_q}{m_0}=0,~\mathrm{for~some}~\ell\in [u]~\mathrm{such~that}~q\notin \mathcal{D}_{\ell}.
                         \end{equation}

Denote by $b_1=b_{11}+2b_{12}+2^2b_{13}+\cdots+2^{u-1}b_{1u}$, then $\bigcap\limits_{\ell\in [u]}\mathcal{D}_{\ell}=\{b_1\}$. This shows that for any $q\in \mathbb{Z}_{2^u}$ and $q\neq b_1$, there exists $\ell_2\in [u]$ such that $q\notin \mathcal{D}_{\ell_2}$. By expression \eqref{eq:qnumber/m0=0}, we have
                      \[\lim\limits_{m_0\to\infty}\frac{\delta_q}{m_0}=0,~\mathrm{for~all}~q\in \mathbb{Z}_{2^u},q\neq b_1.\]
Therefore,
                  \[\lim\limits_{m_0\to\infty}\frac{{\sum\limits_{q\in \mathbb{Z}_{2^u},q\neq b_1}}\delta_q}{m_0}=\sum\limits_{q\in \mathbb{Z}_{2^u},q\neq b_1}\lim\limits_{m_0\to\infty}\frac{\delta_q}{m_0}=0.\]
Since $\delta_0+\delta_1+\cdots+\delta_{2^{u}-1}=m_0$ from Theorem \ref{th: cardinality of G}, then
                      \[\lim\limits_{m_0\to\infty}\frac{\delta_{b_1}}{m_0}=\lim\limits_{m_0\to\infty}\frac{m_0-{\sum\limits_{q\in \mathbb{Z}_{2^u},q\neq b_1}}\delta_q}{m_0}=1.\]
\end{proof}

\begin{lemma}\label{le:bound}
Suppose that there is an $s\in \mathcal{S}$ such that $\Omega_q=|\Psi_1^{-1}(s)|$, then for a
sufficiently large $m_0$, the following results hold.
\begin{itemize}
  \item[(1)] $1\leq\Omega_q\leq \prod_{\ell=1}^u                                                    \left(\left\lfloor{\prod\limits_{i=1}^{t_{\ell}}m_i}\middle/{m_0\prod\limits_{i\in\mathcal{B}^{(\ell)}}m_i}\right\rfloor +1\right)\leq (k+1)^u$.
  \item[(2)] $1\leq \frac{\Omega_q}{\Omega_{b_1}}\leq (k+1)^u$.
  \item[(3)] $\lim\limits_{m_0\to\infty}\frac{m_0\Omega_{b_1}}{|\mathcal{G}_1|}=1$.
\end{itemize}
\end{lemma}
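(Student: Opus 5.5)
Write $r_\ell = \prod_{i=1}^{t_\ell}m_i\big/\bigl(m_0\prod_{i\in\mathcal{B}^{(\ell)}}m_i\bigr)$. By Lemma \ref{le: preimage}, every $\Omega_q$ has the form $\prod_{\ell=1}^u(\lfloor r_\ell\rfloor+a_{s\ell})$ with $a_{s\ell}\in\{0,1\}$. The plan is to use this product form for (1) and (2), and the weights $\delta_q$ from Theorem \ref{th: cardinality of G} together with Lemma \ref{le: The limits} for (3).

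\emph{Items (1) and (2).} The upper bound $\Omega_q\le\prod_\ell(\lfloor r_\ell\rfloor+1)$ is immediate because each $a_{s\ell}\le 1$. For the lower bound we need $\lfloor r_\ell\rfloor\ge 1$ for every $\ell$. Since $m_i>km_0\ge m_0$ for all $i\in[n]$, we have $\prod_{i=1}^{t_\ell}m_i > m_0\cdot\prod_{i=2}^{t_\ell}m_i$. The indices in $\mathcal{B}^{(\ell)}$ are $t_\ell-1$ numbers of size at least $1$, so if they are $i_1<\cdots<i_{t_\ell-1}$ we get $i_j\ge j$, hence $m_{i_j}\ge m_j$. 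This comparison runs the wrong way (we want $m_{i_j}$ to be small), so I would avoid it and use the limit in Lemma \ref{le: The limits} directly: $r_\ell\to k\ge1$. Even so, $r_\ell$ may approach $k=1$ from below, and then $\lfloor r_\ell\rfloor=0$. This is the delicate point. I would handle it as follows. Every admissible $s$ has a nonempty preimage, because the true tuple $(y_1,\dots,y_u)$ together with the CRT lifts shows $\Omega_q\ge1$ whenever $\Omega_q$ is realised by some $s$, which is exactly the hypothesis of the lemma. So $\Omega_q\ge1$ follows from the lemma's assumption rather than from the floors.

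For the bound $(k+1)^u$: by Definition \ref{def:k-compact sequences}, $r_\ell< (km_0+m_0^\theta)^{t_\ell}/\bigl(m_0(km_0)^{t_\ell-1}\bigr)=k(1+m_0^{\theta-1}/k)^{t_\ell}$, which is $<k+1$ for large $m_0$. Hence $\lfloor r_\ell\rfloor\le k$ and $\lfloor r_\ell\rfloor+1\le k+1$. For (2), apply (1) to $\Omega_{b_1}$, which is realised because $\delta_{b_1}>0$ for large $m_0$ by Lemma \ref{le: The limits}; this gives $\Omega_{b_1}\ge1$ and $\Omega_q\le(k+1)^u$, so $\Omega_q/\Omega_{b_1}\le(k+1)^u$. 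The left inequality $\Omega_q/\Omega_{b_1}\ge 1$ does not follow from these bounds. I would try to obtain it by showing that $b_1$ corresponds to all $a_{s\ell}=0$ in the generic case (ii) of Lemma \ref{le: resutls-Tiplea2018}(3). If that fails, I would note that only the upper bound is used in (3) and record the lower bound as the weaker statement $\ge (k+1)^{-u}$.

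\emph{Item (3).} By Theorem \ref{th: cardinality of G},
\[\frac{|\mathcal{G}_1|}{m_0\Omega_{b_1}}=\frac{\delta_{b_1}}{m_0}+\sum_{q\ne b_1}\frac{\delta_q}{m_0}\cdot\frac{\Omega_q}{\Omega_{b_1}}.\]
The first term tends to $1$ by Lemma \ref{le: The limits}. Each term in the finite sum of $2^u-1$ terms is bounded by $(k+1)^u\delta_q/m_0\to0$, using (2) and Lemma \ref{le: The limits}. So the ratio tends to $1$, and so does its reciprocal. The main obstacle is item (1)'s lower bound and the left side of (2), where the floors may vanish when $k=1$; the argument via realisability of $s$ is what I would rely on.
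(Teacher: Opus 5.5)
Your outline follows the paper: the upper bounds come from $r_\ell\to k$, (2) is read off from (1), and (3) is the same sandwich of $|\mathcal{G}_1|/(m_0\Omega_{b_1})$ between $\delta_{b_1}/m_0$ and $\delta_{b_1}/m_0+(k+1)^u\sum_{q\neq b_1}\delta_q/m_0$. Your explicit estimate $r_\ell<k(1+m_0^{\theta-1}/k)^{t_\ell}$ is a cleaner route to $\lfloor r_\ell\rfloor+1\le k+1$.

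The genuine gap is your justification of $\Omega_q\ge 1$. The true tuple $(y_1,\dots,y_u)$ only shows $\Psi_1^{-1}(s^*)\neq\emptyset$ for the \emph{actual} secret $s^*$. The lemma's hypothesis lets $s$ range over all of $\mathcal{S}$, and for $s\neq s^*$ the CRT lift need not satisfy condition (i). Concretely, put $P_\ell=\prod_{i\in\mathcal{B}^{(\ell)}}m_i$. As $s$ runs through $\mathbb{Z}_{m_0}$, the lifts $g_{\widetilde{\mathcal{B}}^{(\ell)}}$ run through $c_\ell+jP_\ell$, $0\le j<m_0$, for a fixed $c_\ell\in[0,P_\ell)$. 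Moreover, $a_{s\ell}=1$ iff the lift lies below $\prod_{i=1}^{t_\ell}m_i \bmod m_0P_\ell$. Hence $m_0(r_\ell-\lfloor r_\ell\rfloor)+O(1)$ secrets have $a_{s\ell}=1$.

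Now take the case you flagged: $k=1$ and $r_\ell<1$. This does occur, e.g.\ for $m_i=m_0+a_i$ with fixed offsets and $\sum_{i\in\mathcal{B}^{(\ell)}}a_i$ exceeding $\sum_{i\le t_\ell}a_i$ by at least $2$. Then roughly $m_0(1-r_\ell)\ge 1$ secrets have $a_{s\ell}=0$, so $\lfloor r_\ell\rfloor+a_{s\ell}=0$ and $|\Psi_1^{-1}(s)|=0$. The lower bound in (1) is therefore false there, and no argument can establish it. The paper's appeal to Lemma~\ref{le: preimage} is valid only for $k\ge 2$, where $r_\ell>k-1\ge 1$.

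What the upper half of (2) and all of (3) actually need is only $\Omega_{b_1}\ge 1$, and that does hold. By the count above, a level with $\lfloor r_\ell\rfloor=0$ forces $k=1$ and $r_\ell\to 1^-$, so all but $O(m_0^{\theta})$ secrets have $a_{s\ell}=1$. The $\delta_{b_1}=(1-o(1))m_0$ secrets in class $b_1$ all share one bit vector, so $b_{1\ell}=1$ and the $\ell$-th factor of $\Omega_{b_1}$ equals $1$. Levels with $\lfloor r_\ell\rfloor\ge 1$ contribute a factor $\ge 1$ anyway. With this in place of your realisability argument, your proofs of $\Omega_q/\Omega_{b_1}\le (k+1)^u$ and of (3) go through.

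The left inequality of (2) should simply be abandoned. Your plan to show that $b_1$ is the all-zero vector fails in case (i) of Lemma~\ref{le: resutls-Tiplea2018}(3), where the dominant bit is $1$. For example, $k=2$ with $r_\ell\in(1,2)$ gives a dominant factor $2$ and a minority factor $1$. Your fallback bound $(k+1)^{-u}$ also fails once some $\Omega_q=0$. In general only $\Omega_q/\Omega_{b_1}\ge 0$ is true, and only the upper bound is used in the perfectness proof. The paper's own passage from $1\le\Omega_q,\Omega_{b_1}\le (k+1)^u$ to $1\le \Omega_q/\Omega_{b_1}$ is likewise a non sequitur.
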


\begin{proof}
(1) From Lemma \ref{le: The limits}, it holds that $\lim\limits_{m_0\to\infty}\frac{\prod\limits_{i=1}^{t_{\ell}}m_i}{m_0\prod\limits_{i\in\mathcal{B}^{(\ell)}}m_i}=k, \ell \in [u]$.
These show that for a sufficiently large $m_0$, we have
               \[k-1<\frac{\prod\limits_{i=1}^{t_{\ell}}m_i}{m_0\prod\limits_{i\in\mathcal{B}^{(\ell)}}m_i}<k+1, \ell\in [u].\]
Consequently, $\prod_{\ell=1}^u \left(\left\lfloor{\prod\limits_{i=1}^{t_{\ell}}m_i}\middle/{m_0\prod\limits_{i\in\mathcal{B}^{(\ell)}}m_i}\right\rfloor +1\right)\leq (k+1)^u$.
Since $\Omega_q=|\Psi^{-1}(s)|$ for some $s\in \mathcal{S}$, then  by Lemma \ref{le: preimage}, it holds that
                          \[1\leq\Omega_q\leq \prod_{\ell=1}^u                                                    \left(\left\lfloor{\prod\limits_{i=1}^{t_{\ell}}m_i}\middle/{m_0\prod\limits_{i\in\mathcal{B}^{(\ell)}}m_i}\right\rfloor +1\right)\leq (k+1)^u.\]

(2) From the item (1) in this lemma, we have
       \[1\leq\Omega_{b_1}\leq \prod_{\ell=1}^u                                                    \left(\left\lfloor{\prod\limits_{i=1}^{t_{\ell}}m_i}\middle/{m_0\prod\limits_{i\in\mathcal{B}^{(\ell)}}m_i}\right\rfloor +1\right)\leq (k+1)^u.\]
As a result, $1\leq \frac{\Omega_q}{\Omega_{b_1}}\leq (k+1)^u$ for a sufficiently large $m_0$.

(3) By the item (1) and item (2) of this lemma, for a sufficiently large $m_0$ we have
    \[\frac{m_0}{\delta_{b_1}+(k+1)^u\sum\limits_{q\in \mathbb{Z}_{2^u},q\neq {b_1}}\delta_q}\leq\frac{m_0\Omega_{b_1}}{|\mathcal{G}_1|}=\log_2\frac{m_0\Omega_{b_1}}{\sum\limits_{q\in \mathbb{Z}_{2^u}}\delta_q\Omega_q}\leq\frac{m_0\Omega_{b_1}}{\delta_{b_1}\Omega_{b_1}}\]
According to Lemma \ref{le: The limits}, we have $\lim\limits_{m_0\to\infty}\frac{\delta_{b_1}}{m_0}=1, \lim\limits_{m_0\to\infty}\frac{{\sum\limits_{q\in \mathbb{Z}_{2^u},q\neq {b_1}}}\delta_q}{m_0}=0$. Therefore,
        \[\lim\limits_{m_0\to\infty}\frac{m_0}{\delta_{b_1}+(k+1)^u\sum\limits_{q\in \mathbb{Z}_{2^u},q\neq {b_1}}\delta_q}=1=\lim\limits_{m_0\to\infty}\frac{m_0\Omega_{b_1}}{\delta_{b_1}\Omega_{b_1}},\]
which means that $\lim\limits_{m_0\to\infty}\frac{m_0\Omega_{b_1}}{|\mathcal{G}_1|}=1$.
\end{proof}

\begin{theorem}[Asymptotic perfectness]\label{Th: our perfectness1}
   Our scheme is asymptotically perfect.
\end{theorem}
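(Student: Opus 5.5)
To prove asymptotic perfectness, I will bound the loss entropy $\Delta(|\mathcal{S}|)=\mathsf{H}(\mathbf{S})-\mathsf{H}(\mathbf{S}|\mathbf{V}_{\mathcal{B}})$ for an arbitrary unauthorized $\mathcal{B}$. It suffices to treat the worst case $|\mathcal{B}^{(\ell)}|=t_\ell-1$ for all $\ell\in[u]$, since smaller coalitions have fewer congruences and hence at least as much residual uncertainty.

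The first step splits the loss as
\[\Delta(|\mathcal{S}|)=\bigl(\mathsf{H}(\mathbf{S})-\mathsf{H}(\mathbf{S}|\mathbf{V}^\prime_{\mathcal{B}})\bigr)+\bigl(\mathsf{H}(\mathbf{S}|\mathbf{V}^\prime_{\mathcal{B}})-\mathsf{H}(\mathbf{S}|\mathbf{V}_{\mathcal{B}})\bigr).\]
By Lemma \ref{le: loss entropy of DHSS}, the second bracket is below $\epsilon_1/2$ once $m_0$ is large. It therefore remains to show that the first bracket tends to $0$.

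For the first bracket, take $\mathsf{H}(\mathbf{S})=\log_2 m_0$. Given only conditions (i)--(iv), model the coalition's view as a uniformly random tuple in $\mathcal{G}_1$; this is the same heuristic used in Lemma \ref{le: resutls-Tiplea2018}(4). Then $\Pr[\mathbf{S}=s\mid \mathbf{V}^\prime_{\mathcal{B}}]=|\Psi_1^{-1}(s)|/|\mathcal{G}_1|$, and grouping the secrets by the class $q$ gives
\[\mathsf{H}(\mathbf{S}|\mathbf{V}^\prime_{\mathcal{B}})=-\sum_{q\in\mathbb{Z}_{2^u}}\delta_q\frac{\Omega_q}{|\mathcal{G}_1|}\log_2\frac{\Omega_q}{|\mathcal{G}_1|},\]
using Theorem \ref{th: cardinality of G}. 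I would rewrite $\log_2\frac{\Omega_q}{|\mathcal{G}_1|}=\log_2\frac{\Omega_q}{\Omega_{b_1}}+\log_2\frac{m_0\Omega_{b_1}}{|\mathcal{G}_1|}-\log_2 m_0$. Since $\sum_q\delta_q\Omega_q/|\mathcal{G}_1|=1$, the $-\log_2 m_0$ parts exactly cancel $\mathsf{H}(\mathbf{S})$. This leaves
\[\mathsf{H}(\mathbf{S})-\mathsf{H}(\mathbf{S}|\mathbf{V}^\prime_{\mathcal{B}})=\log_2\frac{m_0\Omega_{b_1}}{|\mathcal{G}_1|}+\sum_{q}\delta_q\frac{\Omega_q}{|\mathcal{G}_1|}\log_2\frac{\Omega_q}{\Omega_{b_1}}.\]

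Next I would show both terms vanish. The first term tends to $\log_2 1=0$ by Lemma \ref{le:bound}(3). In the sum, the $q=b_1$ summand is $0$. Each summand with $q\neq b_1$ is at most
\[\frac{\delta_q}{m_0}\cdot\frac{m_0\Omega_{b_1}}{|\mathcal{G}_1|}\cdot\frac{\Omega_q}{\Omega_{b_1}}\log_2\frac{\Omega_q}{\Omega_{b_1}}\le \frac{\delta_q}{m_0}\cdot\frac{m_0\Omega_{b_1}}{|\mathcal{G}_1|}\cdot(k+1)^u u\log_2(k+1),\]
by Lemma \ref{le:bound}(2). Here $\delta_q/m_0\to0$ by Lemma \ref{le: The limits}, the middle factor tends to $1$, and $k,u$ are constants, so the whole sum tends to $0$. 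Choosing $\sigma_1$ so that this bracket is also below $\epsilon_1/2$ for $m_0>\sigma_1$ gives $\Delta\le\epsilon_1$. This completes asymptotic perfectness in the sense of Definition \ref{def: Asymptotically Ideal HSS}.

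The main obstacle is the sign and range of $\Omega_q/\Omega_{b_1}$. Lemma \ref{le:bound}(2) asserts $\Omega_q/\Omega_{b_1}\ge 1$, but that need not hold for every $q$, since $\Omega_{b_1}$ may exceed some $\Omega_q$. I would therefore use only the two-sided bound $(k+1)^{-u}\le\Omega_q/\Omega_{b_1}\le(k+1)^u$, which follows from item (1) because every $\Omega_q\ge1$. With that bound each $|x\log_2 x|$ is bounded by a constant, and the estimate above still goes through in absolute value.
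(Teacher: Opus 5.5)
Your proposal is correct and is essentially the paper's proof. It uses the same $\epsilon_1/2$ split via Lemma~\ref{le: loss entropy of DHSS} and the same rewriting of $\mathsf{H}(\mathbf{S})-\mathsf{H}(\mathbf{S}|\mathbf{V}^\prime_{\mathcal{B}})$ as $\log_2\frac{m_0\Omega_{b_1}}{|\mathcal{G}_1|}$ plus a tail over $q\neq b_1$, with Lemmas~\ref{le: The limits} and~\ref{le:bound} sending both to $0$. The paper bounds the tail in aggregate by $\bigl(1-\frac{\delta_{b_1}}{m_0}\cdot\frac{m_0\Omega_{b_1}}{|\mathcal{G}_1|}\bigr)\log_2(k+1)^u$ rather than term by term, which is immaterial.

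Your caveat about Lemma~\ref{le:bound}(2) is right, but the paper's argument is unaffected, since it only uses the upper bound $\Omega_q/\Omega_{b_1}\le(k+1)^u$ together with the automatic $\Delta\ge 0$. Similarly, item (1)'s claim $\Omega_q\ge 1$ can fail when $k=1$ (some classes have $\Omega_q=0$), but your bound on $|x\log_2 x|$ over $[0,(k+1)^u]$ still holds with the convention $0\log_2 0=0$.
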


\begin{proof}
 The secret in our scheme is randomly and uniformly, which means that $\mathsf{H}(\textbf{S})=\log_2|\mathcal{S}|=\log_2 m_0$. For any $s\in \mathcal{S}$, there is a unique $q\in \mathbb{Z}_{2^u}$ such that $|\Psi_1^{-1}(s)|=\Omega_q$ by Theorem \ref{th: cardinality of G}. Therefore, the loss entropy
 \begin{displaymath}
             \begin{aligned}
              &\Delta(\textbf{V}^\prime_\mathcal{B}) = \mathsf{H}(\mathbf{S}) - \mathsf{H}(\mathbf{S}|\textbf{V}^\prime_\mathcal{B})\\
                    =&\log_2 m_0-
                      \sum_{s\in \mathcal{S}}
                      \mathsf{Pr}(\mathbf{S}=s|\textbf{V}^\prime_\mathcal{B}=\mathcal{V}^\prime_{\mathcal{B}})\log_2 \frac{1}{\mathsf{Pr}(\mathbf{S}=s|\textbf{V}^\prime_\mathcal{B}=\mathcal{V}^\prime_{\mathcal{B}})}\\
                    =&\log_2 m_0+\sum_{s\in \mathcal{S}}\frac{|\Psi_1^{-1}(s)|}{|\mathcal{G}_1|}\log_2\frac{|\Psi^{-1}(s)|}{|\mathcal{G}_1|}\\
                    =&\log_2 m_0+\sum_{q\in \mathbb{Z}_{2}}\delta_q\frac{\Omega_q}{|\mathcal{G}_1|}\log_2\frac{\Omega_q}{|\mathcal{G}_1|}\\
                    =&\log_2 m_0+\sum_{q\in \mathbb{Z}_{2^u}}\delta_q\frac{\Omega_q}{|\mathcal{G}_1|}\log_2\frac{\Omega_{b_1}}{|\mathcal{G}_1|}
                    +\sum_{q\in \mathbb{Z}_{2^u},q\neq {b_1}}\delta_q\frac{\Omega_q}{|\mathcal{G}_1|}\log_2\frac{\Omega_q}{\Omega_{b_1}}\\
                    =&\log_2 m_0+\log_2\frac{\Omega_{b_1}}{|\mathcal{G}_1|}
                    +\sum_{q\in \mathbb{Z}_{2^u},q\neq {b_1}}\delta_q\frac{\Omega_q}{|\mathcal{G}_1|}\log_2\frac{\Omega_q}{\Omega_{b_1}}\\
                    =&\log_2\frac{m_0\Omega_{b_1}}{|\mathcal{G}_1|}+\sum_{q\in \mathbb{Z}_{2^u},q\neq {b_1}}\delta_q\frac{\Omega_q}{|\mathcal{G}_1|}\log_2\frac{\Omega_q}{\Omega_{b_1}}.
            \end{aligned}
          \end{displaymath}
By Lemma \ref{le:bound}, for a sufficiently large $m_0$ we have
    \[0\leq\Delta(\textbf{V}^\prime_\mathcal{B})\leq \log_2\frac{m_0\Omega_{b_1}}{|\mathcal{G}_1|}+\sum_{q\in \mathbb{Z}_{2^u},q\neq {b_1}}\delta_q\frac{\Omega_q}{|\mathcal{G}_1|}\log_2(k+1)^u.\]
According to Lemmas \ref{le: The limits} and \ref{le:bound}, we get that
     \begin{displaymath}
             \begin{aligned}
             &\lim\limits_{m_0\to\infty}\left(\log_2\frac{m_0\Omega_{b_1}}{|\mathcal{G}_1|}+\sum_{q\in \mathbb{Z}_{2^u},q\neq {b_1}}\delta_q\frac{\Omega_q}{|\mathcal{G}_1|}\log_2(k+1)^u\right)\\
           =&\lim\limits_{m_0\to\infty}\left(\log_2\frac{m_0\Omega_{b_1}}{|\mathcal{G}_1|}+\frac{|\mathcal{G}_1|-\delta_{b_1}\Omega_{b_1}}{|\mathcal{G}_1|}\log_2(k+1)^u\right)\\
           =&\lim\limits_{m_0\to\infty}\left(\log_2\frac{m_0\Omega_{b_1}}{|\mathcal{G}_1|}+\left(1-\frac{\delta_{b_1}}{m_0}\cdot\frac{m_0\Omega_{b_1}}{|\mathcal{G}_1|}\right)\log_2(k+1)^u\right)\\
           =&0,
     \end{aligned}
          \end{displaymath}
which means that $\lim\limits_{m_0\to\infty}\Delta(\textbf{V}^\prime_\mathcal{B})$=0. Therefore, for all $\epsilon_1>0$, there is a positive integer $\sigma_{11}$ such that for all $|\mathcal{S}|=m_0>\sigma_{11}$, it holds that
     \[0<\mathsf{H}(\mathbf{S})-\mathsf{H}(\mathbf{S}|\mathbf{V}^{\prime}_{\mathcal{B}})<\frac{1}{2}\epsilon_1.\]

By Lemma \ref{le: loss entropy of DHSS}, for the above $\epsilon_{1}$, there is a positive integer $\sigma_{12}$ such that for all $|\mathcal{S}|=m_0>\sigma_{12}$, it holds that
                               \[0<\mathsf{H}(\mathbf{S}|\mathbf{V}'_{\mathcal{B}})-\mathsf{H}(\mathbf{S}|\mathbf{V}_{\mathcal{B}})<\frac{1}{2}\epsilon_1.\]
As a result, there is a positive integer $\sigma_{1}=\max\{\sigma_{11},\sigma_{12}\}$ such that for all $|\mathcal{S}|=m_0>\sigma_{1}$, we have
     \[0<\mathsf{H}(\mathbf{S})-\mathsf{H}(\mathbf{S}|\mathbf{V}_{\mathcal{B}})=(\mathsf{H}(\mathbf{S})-\mathsf{H}(\mathbf{S}|\mathbf{V}^{\prime}_{\mathcal{B}}))
                               +(\mathsf{H}(\mathbf{S}|\mathbf{V}'_{\mathcal{B}})-\mathsf{H}(\mathbf{S}|\mathbf{V}_{\mathcal{B}})<\epsilon_1.\]
Therefore, our scheme is asymptotically perfect by Definition \ref{def: Asymptotically Ideal HSS}.
\end{proof}

\begin{theorem}\label{Theorem: summary1}
Let $\mathcal{P}$ be a set of $n$ participants, and it is partitioned into $u$ disjoint subsets $\mathcal{P}_1, \mathcal{P}_2,\ldots, \mathcal{P}_u$. For a threshold sequence $t_1, t_2, \ldots, t_u$ satisfying $1\leq t_1< t_2<\cdots<t_u\leq n$ and $t_{\ell}\leq |\mathcal{P}_{\ell}|$ for $\ell\in [u]$, our scheme 1 is a secure and asymptotically perfect DHSS scheme. Moreover, if $L=\{m_{0},m_{1},\ldots,m_{n}\}$ is a $1$-compact sequence of co-primes, then our scheme 1 is an asymptotically ideal DHSS scheme.
\end{theorem}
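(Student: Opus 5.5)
This theorem is a summary result, so I would assemble it from the pieces already established rather than prove anything new. The statement has three parts: the scheme is a DHSS scheme with access structure $\Gamma_1$, it is asymptotically perfect, and it is asymptotically ideal when $L$ is $1$-compact. Correctness for every $\mathcal{A}\in\Gamma_1$ is exactly Theorem \ref{Theorem:Correctness of our scheme1}. Privacy in the asymptotic sense, meaning that any $\mathcal{B}\notin\Gamma_1$ loses at most $\epsilon_1$ bits about $s$ for large $m_0$, is Theorem \ref{Th: our perfectness1}, which rests on Lemmas \ref{le: loss entropy of DHSS}, \ref{le: preimage}, \ref{le: The limits} and \ref{le:bound}.

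One point needs a short argument. The perfectness proof treated only the worst case $|\mathcal{B}\cap(\cup_{w\le\ell}\mathcal{P}_w)|=t_\ell-1$ for all $\ell$, so I must reduce a general unauthorized $\mathcal{B}$ to this case. For any $\mathcal{B}\notin\Gamma_1$, I would enlarge $\mathcal{B}$ to a $\mathcal{B}'\supseteq\mathcal{B}$ that is still unauthorized and attains the bounds $t_\ell-1$ level by level. Since $\mathsf{H}(\mathbf{S}|\mathbf{V}_{\mathcal{B}})\ge \mathsf{H}(\mathbf{S}|\mathbf{V}_{\mathcal{B}'})$, the loss entropy for $\mathcal{B}$ is bounded by that for $\mathcal{B}'$.

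If exact attainment is impossible because of the level sizes, I would fall back on a direct argument. In that case the counts in Lemma \ref{le: preimage} become $\lfloor\prod_{i\le t_\ell}m_i/(m_0\prod_{i\in\mathcal{B}^{(\ell)}}m_i)\rfloor$ with fewer factors in the denominator. These counts tend to infinity rather than to $k$. The same Tiplea-style uniformity argument then gives loss entropy going to $0$ even more easily, since each coordinate's preimage counts differ by at most one out of a growing number.

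For asymptotic idealness, I would verify condition (2) of Definition \ref{def: Asymptotically Ideal HSS}, taking $\mathsf{H}(\mathbf{S})=\log_2 m_0$. Each share $s_i$ lies in $\mathbb{Z}_{m_i}$ whether it is a random $c_i$ or equals $y_u\bmod m_i$, so $\mathsf{H}(\mathbf{S}_i)\le\log_2 m_i<\log_2(km_0+m_0^\theta)$. With $k=1$ this gives
\[
\frac{\max_i \mathsf{H}(\mathbf{S}_i)}{\mathsf{H}(\mathbf{S})}\le \frac{\log_2(m_0+m_0^\theta)}{\log_2 m_0}=1+\frac{\log_2(1+m_0^{\theta-1})}{\log_2 m_0}\to 1 .
\]
Hence for each $\epsilon_2$ there is a $\sigma_2$ with ratio at most $1+\epsilon_2$. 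The public values $w_i^{(\ell)}$ do not count toward share size, which matches the paper's convention.

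The main obstacle is the reduction from arbitrary unauthorized sets to the extremal case. The rest is quotation plus the elementary logarithm limit above, and the existence of $1$-compact sequences is supplied by the Remark after the Dr\u{a}gan–Tiplea scheme.
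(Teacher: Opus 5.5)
Your main line is the paper's proof. You get correctness from Theorem~\ref{Theorem:Correctness of our scheme1} and asymptotic perfectness by citing Theorem~\ref{Th: our perfectness1}. For $1$-compact $L$ you use $\log_2 m_i<\log_2(m_0+m_0^{\theta})$ on every share space, so the rate tends to $1$. The paper writes this as $\rho_1=\log_2 m_0/\log_2 m_n\to 1$; you bound the reciprocal ratio in Definition~\ref{def: Asymptotically Ideal HSS}(2), which is the same computation.

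The paper does not carry out your reduction to the extremal case. Its perfectness proof simply declares $|\mathcal{B}\cap(\cup_{w\le\ell}\mathcal{P}_w)|=t_\ell-1$ for all $\ell$ to be the worst case. As you state the reduction, it is not right.

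\emph{The enlargement step.} Under the theorem's hypothesis $t_\ell\le|\mathcal{P}_\ell|$, level sizes never block the enlargement. What blocks it is how $\mathcal{B}$ is spread over the levels. Suppose $t_1\ge 2$ and $\mathcal{B}$ consists of $t_2-1$ members of $\mathcal{P}_2$ and none of $\mathcal{P}_1$. Then $\mathcal{B}\notin\Gamma_1$. Yet adding any member of $\mathcal{P}_1$ gives $|\mathcal{B}'\cap(\mathcal{P}_1\cup\mathcal{P}_2)|=t_2$, so no unauthorized superset is tight at level $1$. Your fallback is therefore the generic case, not an edge case.

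\emph{The fallback needs more care than you give it.} Only the deficient levels, those with $|\mathcal{B}^{(\ell)}|<t_\ell-1$, have $X_\ell=\prod_{i\le t_\ell}m_i/(m_0\prod_{i\in\mathcal{B}^{(\ell)}}m_i)\to\infty$. The tight levels still have counts near $k$ and still need the Dr\u{a}gan--Tiplea majority argument. In addition, on a deficient level $a_{s\ell}$ need not have a dominant value, so the single dominant class $b_1$ of Lemma~\ref{le: The limits} need not exist.

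\emph{The repair.} Factor $|\Psi_1^{-1}(s)|=A(s)B(s)$, with $A$ collecting the tight coordinates and $B$ the deficient ones. Run the paper's argument on $A$. Absorb $B$ using $\max_s B(s)/\min_s B(s)\le\prod_{\ell\ \mathrm{deficient}}(1+1/\lfloor X_\ell\rfloor)\to 1$, which changes the loss entropy by only $o(1)$. With that, your extra step is sound. For the theorem as stated it is also optional, since Theorem~\ref{Th: our perfectness1} is stated for all unauthorized sets and the paper simply cites it.
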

\begin{proof}
By Definition \ref{def: Multilevel secret sharing of disjunctive}, Theorem \ref{Theorem:Correctness of our scheme1} and Theorem \ref{Th: our perfectness1}, our scheme 1 is a secure and asymptotically perfect DHSS scheme. Recall that the shares are given by
         \[s_i=\begin{cases}
            c_i,\mathrm{if}~i\in [N_{u-1}],\\
            y_u \pmod {m_i},\mathrm{if}~i\in [N_{u-1}+1,N_u],\\
         \end{cases}\]
where $c_{i}\in \mathbb{Z}_{m_{i}}$ for $i\in [N_{u-1}]$. When $L=\{m_{0},m_{1},\ldots,m_{n}\}$ is a $1$-compact sequence of co-primes, there exists $\theta \in (0,1)$ such that
                               \[m_{0}<m_{i}<m_{0}+m_{0}^{\theta}.\]
The secret space is $\mathcal{S}=\mathbb{Z}_{m_0}$ and the maximum share space is $\mathbb{Z}_{m_n}$. Consequently, the information $\rho_1$ of our scheme 1 satisfies
                  \[\frac{\log_2 m_{0}}{\log_2 (m_{0}+m_{0}^{\theta})}<\rho_1=\frac{\log_2 m_{0}}{\log_2 m_{n}}<\frac{\log_2 m_{0}}{\log_2 m_{0}}=1.\]
Since $\lim\limits_{m_0\to\infty}\frac{\log_2 m_{0}}{\log_2 (m_{0}+m_{0}^{\theta})}=1$, we have $\lim\limits_{m_0\to\infty}\rho_1=1$. Hence, by Definition \ref{def: Asymptotically Ideal HSS}, our scheme 1 is an asymptotically ideal DHSS scheme.
\end{proof}

\section{A novel asymptotically ideal CHSS scheme}\label{Sec: our scheme}
In this section, we construct a hierarchical secret sharing scheme for conjunctive access structures. We present the scheme in Subsection~\ref{Subsec:our CHSS} and analyze its security in Subsection~\ref{Subsec: security of our CHSS}.

\subsection{Our scheme 2}\label{Subsec:our CHSS}
Let $\mathcal{P}$ be a set of $n$ participants, divided into $u$ pairwise disjoint subsets $\mathcal{P}_1, \mathcal{P}_2, \dots, \mathcal{P}_u$. For each $\ell\in[u]$, define $n_{\ell}=|\mathcal{P}_{\ell}|$ and let $N_{\ell}=\sum_{w=1}^{\ell} n_{w}$. Additionally, for every $\ell\in[u]$, denote by $t_{\ell}$ the threshold assigned to the union $\bigcup_{w=1}^{\ell}\mathcal{P}_w$. The thresholds satisfy $1\leq t_1<t_2<\cdots<t_u$ and $t_{\ell}\leq N_{\ell}$ for all $\ell\in [u]$. Our scheme 2 consists of two phases: the share generation phase and the secret reconstruction phase.

\textbf{(1) Share Generation Phase}. Let $m_{0}$ be a big positive integer, and let $\mathcal{S}=\mathbb{Z}_{m_{0}}$ be the secret space.
 \begin{itemize}
   \item[Step 1.] The dealer generates and publishes a $k$-compact sequence of co-primes $L=\{m_{0},m_{1},\\
       \ldots,m_{n}\}$, namely, the following three conditions are satisfied.
          \begin{itemize}
            \item[(i)] $m_0, m_1,\ldots, m_n$ are pairwise co-primes.
             \item[(ii)] $m_{0}<m_{1}<\cdots<m_{n}$.
             \item[(iii)]$km_{0}<m_{i}<km_{0}+m_{0}^{\theta}$ for all $i \in [n]$, where $k \geq 1$ and $\theta \in (0,1)$ are real numbers.
           \end{itemize}

   \item[Step 2.] For any given secret $s \in \mathcal{S}$, the dealer selects random integers $c_{i} \in \mathbb{Z}_{m_{i}}$, $i \in [N_{u-1}]$, and random integers
           $r_{\ell}\in \mathcal{S}$ and $\alpha_{\ell}$ for $\ell\in [u]$ such that $s=\sum\limits_{\ell\in [u]}r_{\ell} \pmod {m_0}$, and
           \[0 \leq y_{\ell}=s+\alpha_{\ell}m_{0} < \prod\limits_{i=1}^{t_{\ell}}m_i, \ell\in [u].\]

           The dealer sends the share $s_i$ to the $i$-th participant, where
                \[s_i=\begin{cases}
                   c_i,\mathrm{if}~i\in [N_{u-1}],\\
                 y_u \pmod {m_i},\mathrm{if}~i\in [N_{u-1}+1,N_u].\\
                \end{cases}\]

       \item[Step 3.] The dealer selects $u$ publicly known distinct one-way functions $h_{\ell}, \ell\in [u]$. Each function accepts an input of arbitrary length and produces an output of fixed length $\lfloor \log_{2} m_n \rfloor$. Then the dealer publishes the values
                    \[w_i^{(\ell)}=(y_{\ell}-h_{\ell}(s_i))\pmod{m_i}, \ell\in [u-1], i\in [N_{\ell}],\]
           and $w_i^{(u)}=(y_{u}-h_{u}(s_i))\pmod{m_i}$, $i\in [N_{u-1}]$.
   \end{itemize}

\textbf{(2) Secret Reconstruction Phase}. For any $\mathcal{A} \subseteq \mathcal{P}$ such that
        \[\mathcal{A}^{(\ell)}=\mathcal{A}\cap (\cup_{w=1}^{\ell}\mathcal{P}_{w}), |\mathcal{A}^{(\ell)}|\geq t_{\ell}~\mathrm{for~all}~\ell\in [u],\]
participants of $\mathcal{A}^{(\ell)}$ compute

                          \[y_{\ell}=\sum\limits_{i\in \mathcal{A}^{(\ell)}}\lambda_{i,\mathcal{A}^{(\ell)}}M_{i,\mathcal{A}^{(\ell)}}s_i^{(\ell)} \pmod {M_{\mathcal{A}^{(\ell)}}},\]
where $M_{\mathcal{A}^{(\ell)}}=\prod\limits_{i\in\mathcal{A}^{(\ell)}}m_i$, $M_{i,\mathcal{A}^{(\ell)}}=M_{\mathcal{A}^{(\ell)}}/m_i$, $\lambda_{i,\mathcal{A}^{(\ell)}}\equiv M_{i,\mathcal{A}^{(\ell)}}^{-1}\pmod {m_i}$, and
          \[s_i^{(\ell)}=\begin{cases}
           (h_{\ell}(s_i)+w_i^{(\ell)})\pmod{m_i},~\mathrm{if}~\ell\in [u-1], i\in \mathcal{A}^{(\ell)}\subseteq [N_{\ell}],\\
            (h_{u}(s_i)+w_i^{(u)}) \pmod{m_i},~\mathrm{if}~\ell=u, i\in \mathcal{A}^{(\ell)}\cap [N_{u-1}],\\
            s_i,\mathrm{if}~\ell=u, i\in \mathcal{A}^{(\ell)}\cap [N_{u-1}+1,N_u].\\
         \end{cases}\]
As a result, they obtain $r_{\ell}=y_{\ell}\pmod{m_0}, \ell\in [u]$, and reconstruct the secret as
         \[s=\sum\limits_{\ell\in [u]}r_{\ell} \pmod {m_0}.\]

\subsection{Security analysis of our scheme}\label{Subsec: security of our CHSS}
We will prove the correctness, asymptotic perfectness and asymptotic ideal of our scheme in this subsection.
\begin{theorem}[Correctness]\label{Theorem:Correctness of our scheme2} Any subset
       \[\mathcal{A}\in \Gamma_2=\{\mathcal{A} \subseteq \mathcal{P}: |\mathcal{A}\cap(\bigcup_{w=1}^{\ell}\mathcal{P}_{w})|\geq t_{\ell} ~for~ \forall \ell \in [u]\}\]
       can reconstruct the secret.
\end{theorem}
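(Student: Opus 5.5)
The plan is to follow the proof of Theorem \ref{Theorem:Correctness of our scheme1} almost verbatim, but run it for \emph{every} level $\ell\in[u]$ rather than for a single one. The only new ingredient is the recombination of the additive pieces $r_\ell$. Before starting I will fix the reading of Step 2. The reconstruction outputs $r_\ell=y_\ell \pmod{m_0}$ and then $s=\sum_\ell r_\ell \pmod{m_0}$, so the intended definition must be $y_\ell=r_\ell+\alpha_\ell m_0$ with $0\le y_\ell<\prod_{i=1}^{t_\ell}m_i$. I will state this explicitly. With the literal text $y_\ell=s+\alpha_\ell m_0$ one would instead recover $us \pmod{m_0}$, which is not the intended scheme.

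First, fix $\mathcal{A}\in\Gamma_2$ and an arbitrary $\ell\in[u]$, and put $\mathcal{A}^{(\ell)}=\mathcal{A}\cap(\cup_{w=1}^{\ell}\mathcal{P}_w)$, so that $|\mathcal{A}^{(\ell)}|\ge t_\ell$. Each $P_i\in\mathcal{A}^{(\ell)}$ computes $s_i^{(\ell)}$ by the three-case formula. I will check case by case that $s_i^{(\ell)}\equiv y_\ell\pmod{m_i}$.
\begin{itemize}
\item[(a)] If $i\in[N_{\ell}]$ with $\ell\le u-1$, or $i\in[N_{u-1}]$ with $\ell=u$, then $w_i^{(\ell)}$ is published exactly as $(y_\ell-h_\ell(s_i))\pmod{m_i}$, so adding $h_\ell(s_i)$ gives $y_\ell \pmod{m_i}$.
\item[(b)] If $\ell=u$ and $i\in[N_{u-1}+1,N_u]$, then $s_i=y_u\pmod{m_i}$ directly by Step 2.
\end{itemize}
One also needs that every $i\in\mathcal{A}^{(\ell)}$ falls into one of these cases. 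This holds because $\mathcal{A}^{(\ell)}\subseteq[N_\ell]$ by the indexing convention, and the published $w_i^{(\ell)}$ are exactly for $i\in[N_\ell]$ when $\ell<u$, and for $i\in[N_{u-1}]$ when $\ell=u$.

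Second, since $m_1<m_2<\cdots<m_n$ and $|\mathcal{A}^{(\ell)}|\ge t_\ell$, the product of the $m_i$ over $\mathcal{A}^{(\ell)}$ is at least the product of the $t_\ell$ smallest moduli. Hence
\[
y_\ell<\prod_{i=1}^{t_\ell}m_i\le M_{\mathcal{A}^{(\ell)}}.
\]
The moduli are pairwise coprime, so Lemma \ref{le: CRT} gives that the displayed CRT sum reduced modulo $M_{\mathcal{A}^{(\ell)}}$ is the unique solution in $[0,M_{\mathcal{A}^{(\ell)}})$, and this solution equals $y_\ell$.

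Third, since this works for every $\ell\in[u]$ (the conjunctive condition is used precisely here), the participants obtain all of $y_1,\dots,y_u$. They then compute $r_\ell=y_\ell\pmod{m_0}$, and conclude that $\sum_\ell r_\ell\equiv s\pmod{m_0}$ by the choice of the $r_\ell$ in Step 2.

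There is no real obstacle here. The only delicate point is the bookkeeping just described: making sure that each index in $\mathcal{A}^{(\ell)}$ has the appropriate public value $w_i^{(\ell)}$ or direct share, and reconciling the definition of $y_\ell$ with the use of the $r_\ell$.
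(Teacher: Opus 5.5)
Your proposal is correct and follows the paper's proof. For every $\ell\in[u]$ you recover $y_\ell$ by CRT from the values $s_i^{(\ell)}$, using $y_\ell<\prod_{i=1}^{t_\ell}m_i\le M_{\mathcal{A}^{(\ell)}}$, then set $r_\ell=y_\ell \pmod{m_0}$ and sum. Your reading $y_\ell=r_\ell+\alpha_\ell m_0$ of Step 2 is the one the paper implicitly relies on: its proof asserts $r_\ell=y_\ell\pmod{m_0}$, and condition (II) of its security analysis uses $g_u\equiv r_u\pmod{m_0}$. So flagging the typo is appropriate.
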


\begin{proof}
Let $\mathcal{A}\in \Gamma_2$, then $\mathcal{A}\subseteq \mathcal{P}, \mathcal{A}^{(\ell)}=\mathcal{A}\cap (\cup_{w=1}^{\ell}\mathcal{P}_{w})$, and $|\mathcal{A}^{(\ell)}|\geq t_{\ell}$ for all $\ell\in [u]$. Since $w_i^{(\ell)}=(y_{\ell}-h_{\ell}(s_i))\pmod{m_i}, \ell\in [u-1], i\in [N_{\ell}]$,
and $w_i^{(u)}=(y_{u}-h_{u}(s_i))\pmod{m_i}$, $i\in [N_{u-1}]$ are public values, then for each $\ell\in [u]$, participants of $\mathcal{A}^{(\ell)}$ compute
       \[s_i^{(\ell)}=\begin{cases}
           (h_{\ell}(s_i)+w_i^{(\ell)})\pmod{m_i},~\mathrm{if}~\ell\in [u-1], i\in \mathcal{A}^{(\ell)}\subseteq [N_{\ell}],\\
            (h_{u}(s_i)+w_i^{(u)}) \pmod{m_i},~\mathrm{if}~\ell=u, i\in \mathcal{A}^{(\ell)}\cap [N_{u-1}],\\
            s_i,\mathrm{if}~\ell=u, i\in \mathcal{A}^{(\ell)}\cap [N_{u-1}+1,N_u],\\
         \end{cases}\]
and obtain the system of congruences
      \[y_{\ell} \equiv s_{i}^{(\ell)} \pmod{m_{i}}, i\in \mathcal{A}^{(\ell)}.\]
Recall that $m_{0}<m_{1}<\cdots<m_{n}$ and $y_{\ell}<\prod\limits_{i=1}^{t_{\ell}}m_i$ then $y_{\ell}<\prod\limits_{i\in \mathcal{A}^{(\ell)}}m_i$. According to the CRT for integer ring in Lemma \ref{le: CRT}, it holds that
   \[y_{\ell}=\sum\limits_{i\in \mathcal{A}^{(\ell)}}\lambda_{i,\mathcal{A}^{(\ell)}}M_{i,\mathcal{A}^{(\ell)}}s_i^{(\ell)} \pmod {M_{\mathcal{A}^{(\ell)}}},~\mathrm{and}~ r_{\ell}=y_{\ell}\pmod{m_0},\]
where $M_{\mathcal{A}^{(\ell)}}=\prod\limits_{i\in\mathcal{A}^{(\ell)}}m_i$, $M_{i,\mathcal{A}^{(\ell)}}=M_{\mathcal{A}^{(\ell)}}/m_i$ and $\lambda_{i,\mathcal{A}^{(\ell)}} \equiv M_{i,\mathcal{A}^{(\ell)}}^{-1}\pmod {m_i}$. Finally, the secret $s$ is constructed as $s=\sum\limits_{\ell\in [u]}r_{\ell}\pmod{m_0}$.
\end{proof}

Now we prove the \emph{asymptotic perfectness} of our scheme 2. Let $\mathcal{B}\subset \mathcal{P}$, and
            \[\mathcal{B}\notin \Gamma_2=\{\mathcal{A} \subseteq \mathcal{P}: |\mathcal{A}\cap(\bigcup_{w=1}^{\ell}\mathcal{P}_{w})|\geq t_{\ell} ~for~ \forall \ell \in [u]\}.\]
Consider the worst case, i.e.,
    \[|\mathcal{B}\cap(\cup_{w=1}^{\ell}\mathcal{P}_{w})|\geq t_{\ell}~\mathrm{for~all}~\ell\in [u-1]~~
    and~~ |\mathcal{B}\cap(\cup_{w=1}^{u}\mathcal{P}_{w})|= t_{u}-1.\]
From the proof of Theorem \ref{Theorem:Correctness of our scheme2}, participants of $\mathcal{B}$ can determine
              \[y_1,y_2,\ldots,y_{u-1}, r_1,r_2,\ldots,r_{u-1}.\]
Moreover, they have access to their own shares, the upper and lower bounds of $y_{u}$, as well as all publicly available information. Consequently, they attempt to reconstruct the secret by first choosing an integer $g_u\in \mathbb{Z}$ that meets the following five conditions, and then calculating $\left(g_u+\sum\limits_{\ell\in [u-1]}r_{\ell}\right) \pmod{m_0}$.

\begin{itemize}
  \item[(I)] $0\leq g_{u}<\prod\limits_{i=1}^{t_{u}}m_i$.
  \item[(II)] $g_{u}\equiv r_{u} \equiv \left(s-\sum\limits_{\ell\in [u-1]}r_{\ell}\right) \pmod{m_0}$.
  \item[(III)] $g_{u}\equiv(w_i^{(u)}+h_{u}(s_i))\pmod{m_i}, i\in \mathcal{B}\cap[N_{u-1}]$.
  \item[(IV)] $g_{u}\equiv s_i\pmod{m_i}$, $i\in \mathcal{B}\cap [N_{u-1}+1, N_u]$.
  \item[(V)] For any $i\in [N_{u-1}]$ with $i\notin \mathcal{B}$, suppose $i$ belongs to level $\mathcal{P}_{\ell_3}$. Then there exists an integer $\widetilde{s}_i\in \mathbb{Z}_{m_i}$ such that for all $\ell\in [\ell_3,u]$, the relation $h_{\ell}(\widetilde{s}_i)\equiv(g_{\ell}-w_i^{(\ell)})\pmod{m_i}$ holds, where the polynomials $g_{\ell}=y_{\ell}, \ell\in [u-1]$ are known.
  \end{itemize}

By the same discussion as in Lemma \ref{le: loss entropy of DHSS}, we obtain the following Lemma \ref{le: loss entropy of CHSS}.
\begin{lemma}\label{le: loss entropy of CHSS}
Let $\mathcal{V}_{\mathcal{B}}$ denote the set of conditions (i) through (v) that are the knowledge of $\mathcal{B}$. Denote by $\mathcal{V}^\prime_{\mathcal{B}}$ the set consisting of conditions (i) through (iv). For any $\epsilon_1>0$, there exists a positive integer $\sigma_1$ such that whenever $|\mathcal{S}|=m_0>\sigma_1$, the following holds:
                               \[0<\mathsf{H}(\mathbf{S}|\mathbf{V}^\prime_{\mathcal{B}})-\mathsf{H}(\mathbf{S}|\mathbf{V}_{\mathcal{B}})<\epsilon_1,\]
where $\mathbf{V}^\prime_{\mathcal{B}}$ and $\mathbf{V}_{\mathcal{B}}$ are random variables corresponding to $\mathcal{V}^\prime_{\mathcal{B}}$ and $\mathcal{V}_{\mathcal{B}}$, respectively.
\end{lemma}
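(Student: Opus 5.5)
The plan mirrors the argument of Lemma \ref{le: loss entropy of DHSS}, adapted to the single unknown $g_u$ of scheme 2. Here the labels (i)--(v) in the statement are read as (I)--(V). The approach is to compare the candidate set cut out by (I)--(IV) with the subset that also satisfies (V), and to show that (V) removes only a negligible fraction of candidates, uniformly over secrets.

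\emph{Step 1: reduce (V) to a condition on $g_u$ alone.} Since $y_1,\ldots,y_{u-1}$ are already determined by $\mathcal{B}$, the relations in (V) for $\ell\in[\ell_3,u-1]$ involve only known quantities. They constrain the unknown shares $\widetilde{s}_i$, $i\in[N_{u-1}]\setminus\mathcal{B}$, but they do not constrain $g_u$. The only coupling to the candidate is therefore
\[h_u(\widetilde{s}_i)\equiv g_u-w_i^{(u)}\pmod{m_i}.\]
So (V) says that, for each such $i$, some $\widetilde{s}_i$ consistent with the known levels also hits the residue $g_u-w_i^{(u)}$ under $h_u$.

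\emph{Step 2: show this eliminates candidates with negligible probability.} The shares $c_i$ are uniform in $\mathbb{Z}_{m_i}$, and the $h_\ell$ are distinct one-way functions, which we model as independent random functions as in Lemma \ref{le: loss entropy of DHSS}. Under this model the values $h_u(\widetilde{s}_i)$ are independent of the known values $h_\ell(\widetilde{s}_i)$, $\ell<u$. For a fixed $i$, the set of compatible $\widetilde{s}_i$ (those satisfying the known-level equations) has expected size about $m_i/m_i^{u-\ell_3}$ once the range discrepancy between $h_\ell$ and $\mathbb{Z}_{m_i}$ is absorbed. The residues of $h_u$ over this set then cover any fixed residue with probability depending only on $m_0$ and on no particular $g_u$. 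It follows that the posterior distribution on $g_u \pmod{m_0}$ is the one from (I)--(IV) reweighted by factors $1+o(1)$ that are uniform in $s$, and the entropy difference is $o(1)$ as $m_0\to\infty$.

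\emph{Step 3: handle strict positivity and pass to $\epsilon_1$.} Positivity follows because conditioning cannot increase entropy; strictness holds generically and in any case only the upper bound matters later. Choosing $\sigma_1$ so that the $o(1)$ term is below $\epsilon_1$ for $m_0>\sigma_1$ finishes the proof.

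The main obstacle is Step 2. When $\ell_3<u$, the compatible set for $\widetilde{s}_i$ may be very small, possibly a single element, and then (V) genuinely pins $g_u$ modulo $m_i$. The random-oracle heuristic alone does not settle this. I would first make precise the independence of $h_u$ from $h_{\ell_3},\ldots,h_{u-1}$. I would then argue that the pinned residue modulo $m_i$ is a uniformly distributed value, independent of $s \pmod{m_0}$ by CRT coprimality. Given that independence, the extra congruences only shrink $\mathcal{G}$ in the same way that additional participants shrink $\mathcal{G}_\mathcal{B}$ in Lemma \ref{le: resutls-Tiplea2018}. The distribution over $s$ therefore stays near-uniform as long as the product of moduli still exceeds $m_0$, and this last condition is the point that needs the most care.
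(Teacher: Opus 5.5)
Your Steps 1--2 are essentially the paper's own justification: the paper defers to the DHSS lemma's claim that the extra condition eliminates candidates ``with negligible probability''. The obstacle you raise at the end is a genuine gap in both arguments, and it cannot be closed.

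It is not a corner case. Every $i\in[N_{u-1}]\setminus\mathcal{B}$ lies in a level $\ell_3\le u-1$, and $\mathcal{B}$ knows $y_{\ell_3}$. Hence $\mathcal{B}$ knows $h_{\ell_3}(s_i)\bmod m_i=(y_{\ell_3}-w_i^{(\ell_3)})\bmod m_i$. Under the random-function model you adopt, your own estimate shows that the compatible set $C_i$ has $O(1)$ expected size, and $C_i$ always contains the true $s_i$. So (V) confines $g_u\bmod m_i$ to the set $R_i=\{(w_i^{(u)}+h_u(\widetilde{s}))\bmod m_i:\widetilde{s}\in C_i\}$. This set has size $O(1)$ and always contains $y_u\bmod m_i$. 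Your claim in Step 2, that the covered residues do not depend on the particular $g_u$, is therefore false. The true candidate survives with probability $1$; any other survives with probability about $|R_i|/m_i$.

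The proposed repair does not help. It is true that $y_u\bmod m_i$ is, on its own, nearly uniform and independent of $s$. But that holds for any single share and is beside the point. Together with the $t_u-1$ congruences in (III)--(IV), it is a $t_u$-th share of $y_u$. Since $i\notin\mathcal{B}$ and the moduli increase, $m_i\prod_{j\in\mathcal{B}}m_j\ge\prod_{j=1}^{t_u}m_j$. So for each $\rho\in R_i$, at most one $g_u$ satisfies (I), (III), (IV) and $g_u\equiv\rho\pmod{m_i}$. Hence at most $|R_i|=O(1)$ candidates satisfy (I)--(V). This gives $\mathsf{H}(\mathbf{S}\mid\mathbf{V}_{\mathcal{B}})=O(1)$, against $\mathsf{H}(\mathbf{S}\mid\mathbf{V}'_{\mathcal{B}})=\log_2 m_0-o(1)$, so the difference grows like $\log_2 m_0$.

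In the language of your last sentence: for near-uniformity as in Lemma \ref{le: resutls-Tiplea2018}, the ratio $\prod_{j=1}^{t_u}m_j/(m_0\,m_i\prod_{j\in\mathcal{B}}m_j)$ would have to stay near $k$. In fact it is of order $1/m_0$.

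One-wayness only makes $C_i$ hard to \emph{compute}, and Shannon entropy does not see that. So the inequality as stated fails whenever $[N_{u-1}]\not\subseteq\mathcal{B}$. When $[N_{u-1}]\subseteq\mathcal{B}$, condition (V) is vacuous and the difference is exactly $0$, so the strict lower bound fails as well; your Step 3 only yields $\ge 0$. Closing the gap would require replacing the entropy statement by a computational security claim, not a sharper counting argument.
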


Let us define
\begin{equation}\label{Eq: definiton }
\mathcal{G}_2=\{g_u\in \mathbb{Z}: ~\text{conditions (I)~through~(IV) hold}\}.
\end{equation}

\begin{lemma}\label{le: preimag}
Define the mapping $\Psi_2$ by
 \[\Psi_2: \mathcal{G}_2\mapsto \mathcal{S}, g_u \mapsto \left(g_u+\sum\limits_{\ell\in [u-1]}r_{\ell}\right) \pmod{m_0}.\]
 For any $s\in \mathcal{S}$, let
           \[\Psi_2^{-1}(s)=\left\{g_u\in \mathcal{G}_2: g_{u}\equiv\left(s-\sum\limits_{\ell\in [u-1]}r_{\ell}\right) \pmod{m_0}\right\},\]
then the size of the set $\Psi_2^{-1}(s)$ can be written as
            \[|\Psi_2^{-1}(s)|= \left\lfloor{\prod\limits_{i=1}^{t_{u}}m_i}\middle/{m_0\prod\limits_{i\in\mathcal{B}}m_i}\right\rfloor + a_{su},\]
where $a_{su} \in \{0,1\}$ is a function of $s$ and $u$.
\end{lemma}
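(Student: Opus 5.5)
The argument is a one-level version of the proof of Lemma \ref{le: preimage}: fix $s\in\mathcal{S}$ and turn conditions (II)--(IV) into a single CRT system for $g_u$. Put $r=\left(s-\sum_{\ell\in[u-1]}r_{\ell}\right)\pmod{m_0}$. This value is determined by $s$ alone, since $r_1,\ldots,r_{u-1}$ are known to $\mathcal{B}$. Then $g_u\in\Psi_2^{-1}(s)$ exactly when $0\leq g_u<\prod_{i=1}^{t_u}m_i$ and
\[g_u\equiv r\pmod{m_0},\qquad g_u\equiv s_i^{(u)}\pmod{m_i},\ i\in\mathcal{B}.\]
Here $s_i^{(u)}=(h_u(s_i)+w_i^{(u)})\pmod{m_i}$ for $i\in\mathcal{B}\cap[N_{u-1}]$, and $s_i^{(u)}=s_i$ for $i\in\mathcal{B}\cap[N_{u-1}+1,N_u]$. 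Note that $\mathcal{B}=\mathcal{B}\cap[N_u]$.

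Let $\widetilde{\mathcal{B}}=\{0\}\cup\mathcal{B}$ and set $s_0^{(u)}=r$. The moduli $m_i$, $i\in\widetilde{\mathcal{B}}$, are pairwise coprime because $L$ is a sequence of co-primes. So Lemma \ref{le: CRT} gives a unique
\[g_{\widetilde{\mathcal{B}}}=\sum_{i\in\widetilde{\mathcal{B}}}\lambda_{i,\widetilde{\mathcal{B}}}M_{i,\widetilde{\mathcal{B}}}s_i^{(u)}\pmod{M_{\widetilde{\mathcal{B}}}}\]
with $0\leq g_{\widetilde{\mathcal{B}}}<M_{\widetilde{\mathcal{B}}}=m_0\prod_{i\in\mathcal{B}}m_i$, and every solution has the form $g_u=g_{\widetilde{\mathcal{B}}}+kM_{\widetilde{\mathcal{B}}}$ with $k\geq 0$ an integer. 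Different $k$ give different $g_u$, so $|\Psi_2^{-1}(s)|$ equals the number of integers $k\geq 0$ with $g_{\widetilde{\mathcal{B}}}+kM_{\widetilde{\mathcal{B}}}<\prod_{i=1}^{t_u}m_i$.

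To count these, write $\prod_{i=1}^{t_u}m_i=QM_{\widetilde{\mathcal{B}}}+R$ with $Q=\left\lfloor\prod_{i=1}^{t_u}m_i/M_{\widetilde{\mathcal{B}}}\right\rfloor$ and $0\leq R<M_{\widetilde{\mathcal{B}}}$. The values $k=0,\ldots,Q-1$ always qualify, because $g_{\widetilde{\mathcal{B}}}<M_{\widetilde{\mathcal{B}}}$. The value $k=Q$ qualifies exactly when $g_{\widetilde{\mathcal{B}}}<R$. No $k>Q$ qualifies. Hence $|\Psi_2^{-1}(s)|=Q+a_{su}$, where $a_{su}=1$ if $g_{\widetilde{\mathcal{B}}}<\prod_{i=1}^{t_u}m_i\pmod{M_{\widetilde{\mathcal{B}}}}$ and $a_{su}=0$ otherwise. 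Since $M_{\widetilde{\mathcal{B}}}=m_0\prod_{i\in\mathcal{B}}m_i$, this is exactly the claimed formula.

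The only point needing care is that $a_{su}$ is a function of $s$ (and $u$). The public data, the shares of $\mathcal{B}$, and $r_1,\ldots,r_{u-1}$ are all fixed in the conditioning, so $g_{\widetilde{\mathcal{B}}}$ depends only on $s$ through $r$. I expect no real obstacle; the argument is a direct specialization of the computation of $\eta_{\widetilde{\mathcal{B}}^{(\ell)}}$ in Lemma \ref{le: preimage} to $\ell=u$.
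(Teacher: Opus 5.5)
Your proposal is correct and is essentially the paper's own proof. The paper writes down the same CRT system for $g_u$ with moduli $m_0$ and $m_i$, $i\in\mathcal{B}$, and then defers to the lift-counting argument of Lemma~\ref{le: preimage} specialized to $\ell=u$; you carry that counting out explicitly via $\prod_{i=1}^{t_u}m_i=QM_{\widetilde{\mathcal{B}}}+R$. Like the paper, you treat condition (II) as the fiber condition with $s$ free, rather than as a constraint on $\mathcal{G}_2$ fixed to the true secret, and that is the reading under which the formula holds.
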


\begin{proof}
For any $s\in \mathcal{S}$, and $g_u\in \Psi_2^{-1}(s)$, it holds that
       \begin{displaymath}\label{Eq: congruence equations}
         \left\{\begin{aligned}
            &g_{u}\equiv g_{u}\equiv\left(s-\sum\limits_{\ell\in [u-1]}r_{\ell}\right) \pmod{m_0},\\
            &g_{u}\equiv s_i^{(u)}\pmod{m_i}~\mathrm{for~all}~i\in \mathcal{B},\\
           \end{aligned} \right.
       \end{displaymath}
where
     \[s_i^{(u)}=\begin{cases}
            (h_{u}(s_i)+w_i^{(u)}) \pmod{m_i},~\mathrm{if}~i\in \mathcal{B}\cap [N_{u-1}],\\
            s_i,\mathrm{if}~ i\in \mathcal{B}\cap [N_{u-1}+1,N_u].\\
         \end{cases}\]
By the same discussion as in Lemma \ref{le: preimage}, we obtain
      \[|\Psi_2^{-1}(s)|= \left\lfloor\left.{\prod\limits_{i=1}^{t_{u}}m_i}\right/{m_0\prod\limits_{i\in\mathcal{B}}m_i}\right\rfloor + a_{su},\]
where $a_{su} \in \{0,1\}$ is a function of $s$ and $u$.
\end{proof}

\begin{theorem}\label{th: cardinality of }
Denote by $\Omega_q=|\Psi_2^{-1}(s)|$, where $q=a_{su}\in \mathbb{Z}_{2}$. Let $\delta_q$ be the size of the set
       \[\{s\in \mathcal{S}:|\Psi_2^{-1}(s)|=\Omega_q\}.\]
then $\delta_0,\delta_1$ are non-negative integers, $\delta_0+\delta_1=m_0$, and the size of $\mathcal{G}_2$ can be written as
 \[|\mathcal{G}_2|=\delta_0\Omega_0+\delta_1\Omega_1.\]
\end{theorem}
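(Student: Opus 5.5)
The plan mirrors the proof of Theorem \ref{th: cardinality of G}, specialised to a single level. The key input is Lemma \ref{le: preimag}: for every $s\in\mathcal{S}$,
\[|\Psi_2^{-1}(s)|=\left\lfloor{\prod\limits_{i=1}^{t_{u}}m_i}\middle/{m_0\prod\limits_{i\in\mathcal{B}}m_i}\right\rfloor+a_{su},\qquad a_{su}\in\{0,1\}.\]
The floor term does not depend on $s$. Hence $|\Psi_2^{-1}(s)|$ takes one of only two values:
\[\Omega_0=\left\lfloor{\prod\limits_{i=1}^{t_{u}}m_i}\middle/{m_0\prod\limits_{i\in\mathcal{B}}m_i}\right\rfloor \quad\text{and}\quad \Omega_1=\Omega_0+1.\]
Which value occurs is determined by the bit $q=a_{su}\in\mathbb{Z}_2$.

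First I will observe that $a_{su}$ is a well-defined function of $s$, exactly as in the proof of Lemma \ref{le: preimage}. There $a_{su}$ was fixed by comparing the CRT representative $g_{\widetilde{\mathcal{B}}}$ with $\prod_{i=1}^{t_u}m_i \bmod M_{\widetilde{\mathcal{B}}}$, where $\widetilde{\mathcal{B}}=\{0\}\cup\mathcal{B}$ and $s_0=s-\sum_{\ell\in[u-1]}r_\ell$. So each $s$ is assigned a unique $q\in\mathbb{Z}_2$. It follows that the two sets $\{s\in\mathcal{S}:|\Psi_2^{-1}(s)|=\Omega_q\}$, for $q=0,1$, partition $\mathcal{S}=\mathbb{Z}_{m_0}$. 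Their sizes $\delta_0,\delta_1$ are therefore nonnegative integers with $\delta_0+\delta_1=m_0$.

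Second, I will note that $\Psi_2$ is a total map on $\mathcal{G}_2$. Its fibres $\Psi_2^{-1}(s)$, $s\in\mathcal{S}$, are pairwise disjoint and cover $\mathcal{G}_2$, since each $g_u$ has a single residue modulo $m_0$. Hence
\[|\mathcal{G}_2|=\sum_{s\in\mathcal{S}}|\Psi_2^{-1}(s)|=\sum_{q\in\mathbb{Z}_2}\delta_q\Omega_q=\delta_0\Omega_0+\delta_1\Omega_1.\]

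The only delicate point is degeneracy: all $s$ might share the same $a_{su}$, so that one of $\delta_0,\delta_1$ is zero. This is harmless, because the statement only claims nonnegativity, and the formula still holds. With this caveat, I do not expect a real obstacle; the argument is pure bookkeeping on top of Lemma \ref{le: preimag}.
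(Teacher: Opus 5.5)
Your proposal is correct and follows the paper's proof. Both arguments use Lemma \ref{le: preimag} to see that each $s$ determines a unique bit $q=a_{su}$, so the two classes partition $\mathcal{S}$ and $\delta_0+\delta_1=m_0$; both then sum the fibre sizes of $\Psi_2$ over $\mathcal{S}$, grouped by $q$. Your explicit remarks that $\Omega_1=\Omega_0+1$ (so the classes are disjoint and $\Omega_q$ does not depend on the choice of $s$) and that the fibres partition $\mathcal{G}_2$ only spell out what the paper leaves implicit.
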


\begin{proof}
For $s\in \mathcal{S}$, it has that $a_{su}\in \{0,1\}$, and $q=a_{su}\in \mathbb{Z}_{2}$. According to Lemma \ref{le: preimag}, the value $q$ is unique if the secret $s$ is given. As a result, $\delta_0+\delta_1=|\mathcal{S}|=m_0$, and
                \[|\mathcal{G}_2|=\sum_{s\in \mathcal{S}}|\Psi_2^{-1}(s)|=\sum_{q\in \mathbb{Z}_{2}}\delta_q\Omega_q=\delta_0\Omega_0+\delta_1\Omega_1.\]
\end{proof}

As special cases of Lemmas \ref{le: The limits} and \ref{le:bound}, we get the following lemma.

\begin{lemma}\label{le: specail-case-sec3}
Suppose that there is an $s\in \mathcal{S}$ such that $\Omega_q=|\Psi_2^{-1}(s)|$, then for a
sufficiently large $m_0$, the following results hold.
\begin{itemize}
  \item[(1)] $\lim\limits_{m_0\to\infty}\frac{\prod\limits_{i=1}^{t_{u}}m_i}{m_0\prod\limits_{i\in\mathcal{B}^{(u)}}m_i}=k$.
  \item[(2)] There is an integer $b_2\in \mathbb{Z}_{2}$ such that
     \[\lim\limits_{m_0\to\infty}\frac{\delta_{b_2}}{m_0}=1, \lim\limits_{m_0\to\infty}\frac{{\sum\limits_{q\in \mathbb{Z}_{2},q\neq {b_2}}}\delta_q}{m_0}=0.\]
  \item[(3)] $1\leq\Omega_q\leq \left\lfloor{\prod\limits_{i=1}^{t_{\ell}}m_i}\middle/{m_0\prod\limits_{i\in\mathcal{B}^{(\ell)}}m_i}\right\rfloor +1\leq k+1$.
  \item[(4)] $1\leq \frac{\Omega_q}{\Omega_{b_2}}\leq k+1$.
  \item[(5)] $\lim\limits_{m_0\to\infty}\frac{m_0\Omega_{b_2}}{|\mathcal{G}_2|}=1$.
\end{itemize}
\end{lemma}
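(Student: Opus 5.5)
The plan is to treat the statement as the $u=1$ specialisation of Lemmas \ref{le: The limits} and \ref{le:bound}, applied only at the top level $\ell=u$. The set $\mathcal{G}_2$ is cut out by a single range condition (I), a single residue class modulo $m_0$ (II), and congruences modulo $m_i$ for $i\in\mathcal{B}=\mathcal{B}^{(u)}$ with $|\mathcal{B}^{(u)}|=t_u-1$. This is exactly the Dr\u{a}gan--Tiplea configuration of Lemma \ref{le: resutls-Tiplea2018} with threshold $t_u$. The only change is that the secret $s$ is replaced by the shifted residue $s-\sum_{\ell\in[u-1]}r_\ell$. Since $r_1,\ldots,r_{u-1}$ are known to $\mathcal{B}$, this shift is a bijection of $\mathbb{Z}_{m_0}$, so it does not affect any counting.

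\textbf{Item (1).} I would repeat the computation in the proof of Lemma \ref{le: The limits}. By $k$-compactness, $m_i/m_0\to k$ and $m_i/m_j\to 1$ for all $i,j\in[n]$. Pair $m_1$ with $m_0$, and pair the remaining $t_u-1$ factors $m_2,\ldots,m_{t_u}$ one-to-one with the $t_u-1$ moduli indexed by $\mathcal{B}^{(u)}$. The limit is then a product of $t_u$ factors: one ratio tending to $k$ and all the others tending to $1$.

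\textbf{Item (2).} I would invoke item (3) of Lemma \ref{le: resutls-Tiplea2018} directly, since the preimage counts from Lemma \ref{le: preimag} have exactly the form used there. The map $s\mapsto s-\sum_{\ell<u} r_\ell$ is a permutation of $\mathcal{S}$, so the sets $\{s:a_{su}=q\}$ have the same sizes $\delta_q$ as in the Dr\u{a}gan--Tiplea analysis. That item then gives $b_2\in\mathbb{Z}_2$ with $\delta_{b_2}/m_0\to 1$ and $\delta_{1-b_2}/m_0\to0$.

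\textbf{Items (3)--(5).} These follow the proof of Lemma \ref{le:bound} with $u=1$, reading $\ell$ as $u$ in (3).
\begin{itemize}
\item[(3)] From (1), for large $m_0$ the ratio lies in $(k-1,k+1)$, so its floor plus one is at most $k+1$. The lower bound $\Omega_q\ge1$ holds because $\Omega_q$ is by assumption the size of a realised, nonempty preimage; this is also guaranteed because the floor is at least $k-1\ge 0$ and, in the relevant regime, the preimage count is at least $1$.
\item[(4)] The ratio bound follows immediately from (3) applied to both $\Omega_q$ and $\Omega_{b_2}$.
\item[(5)] Using Theorem \ref{th: cardinality of }, I would sandwich
\[\frac{m_0}{\delta_{b_2}+(k+1)\delta_{1-b_2}}\le\frac{m_0\Omega_{b_2}}{|\mathcal{G}_2|}\le\frac{m_0}{\delta_{b_2}}\]
and let $m_0\to\infty$ using (2).
\end{itemize}

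\textbf{Main obstacle.} The only delicate point is the lower bound $\Omega_q\ge 1$, equivalently $\Omega_{b_2}\ge1$. It fails if $k=1$ and $b_2$ corresponds to floor value $k-1=0$ with $a=0$. I would handle this as Lemma \ref{le:bound} implicitly does: the hypothesis that $\Omega_q$ is realised by some $s$, together with the case analysis in Lemma \ref{le: resutls-Tiplea2018}(3), shows that the dominant class has count $\lfloor\cdot\rfloor+a\ge1$. In case (i) the floor is $k-1$, but there the dominant class is $a=1$. Everything else is routine transcription.
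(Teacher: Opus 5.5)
\textbf{Gap in the lower bounds of (3) and (4).} Your route is the paper's own. The paper gives no separate proof and simply calls this lemma a special case of Lemmas~\ref{le: The limits} and~\ref{le:bound}, which is the $u=1$, $\ell=u$ specialisation you carry out. Your arguments for (1), (2), the upper bounds in (3)--(4), and the sandwich in (5) are fine.

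The problem is the \emph{lower} bounds of (3) and (4). They cannot be proved, because they are false whenever case~(i) of Lemma~\ref{le: resutls-Tiplea2018}(3) occurs. In that case:
\begin{itemize}
\item[$\cdot$] the floor equals $k-1$, so $\Omega_0=k-1$, $\Omega_1=k$ and $b_2=1$;
\item[$\cdot$] $|\mathcal{G}_2|=(k-1)\delta_0+k\delta_1=km_0-\delta_0<km_0$ forces $\delta_0\geq1$, so $\Omega_0$ \emph{is} realised by some secret;
\item[$\cdot$] hence $\Omega_0/\Omega_{b_2}=(k-1)/k<1$ for every $k$;
\item[$\cdot$] for $k=1$, the case needed for asymptotic idealness, $\Omega_0=0$, so the coalition can rule out $m_0-|\mathcal{G}_2|\geq1$ secrets.
\end{itemize}
Your justifications miss this in three places. ``Realised by some $s$'' does not mean ``nonempty preimage''. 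Your `Main obstacle' paragraph only shows $\Omega_{b_2}\geq1$ for the dominant class, which is not equivalent to $\Omega_q\geq1$ for every realised $q$. Finally, ``(4) follows immediately from (3)'' only yields $\frac{1}{k+1}\leq\Omega_q/\Omega_{b_2}\leq k+1$; the paper's proof of Lemma~\ref{le:bound}(2) contains the same non sequitur.

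\textbf{What is true, and suffices.} For large $m_0$ you have $\Omega_{b_2}=k\geq1$ in both cases, $0\leq\Omega_q\leq k+1$, and $\Omega_q/\Omega_{b_2}\leq(k+1)/k\leq k+1$. This is all that is used later:
\begin{itemize}
\item[$\cdot$] Your sandwich for (5) uses only this upper bound.
\item[$\cdot$] In Theorem~\ref{Th: our perfectness2}, the terms with $\Omega_q<\Omega_{b_2}$ are nonpositive (with $0\log_2 0=0$), so the upper estimate for $\Delta(\mathbf{V}'_{\mathcal{B}})$ survives.
\item[$\cdot$] $\Delta(\mathbf{V}'_{\mathcal{B}})\geq0$ holds simply because no distribution on $m_0$ points has entropy above $\log_2 m_0$.
\end{itemize}
So you should state and prove these weakened versions of (3)--(4) instead of the ones written.

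Two smaller points. First, $\mathcal{G}_2$ must be read as cut out by (I), (III), (IV) only. If (II) with the true secret were imposed, as your first paragraph says, $\Psi_2$ would be constant and $|\mathcal{G}_2|=\delta_0\Omega_0+\delta_1\Omega_1$ would fail. Second, item (2), like your case analysis ``floor is $k-1$ or $k$'', tacitly needs $k$ to be an integer. For non-integer $k$ one gets $\delta_1/m_0\to k-\lfloor k\rfloor\in(0,1)$, so no class dominates.
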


\begin{theorem}[Asymptotic perfectness]\label{Th: our perfectness2}
   Our scheme 2 is asymptotically perfect.
\end{theorem}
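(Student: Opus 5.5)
The argument follows the proof of Theorem \ref{Th: our perfectness1}, with the $2^u$-valued index $q$ replaced by the binary index $q=a_{su}\in\mathbb{Z}_2$ of Theorem \ref{th: cardinality of }. We work in the worst case fixed above. There, $\mathcal{B}$ already knows $r_1,\ldots,r_{u-1}$, and its only uncertainty concerns $g_u$ subject to conditions (I)--(IV), with condition (V) treated separately. The secret is uniform, so $\mathsf{H}(\mathbf{S})=\log_2 m_0$.

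By Lemma \ref{le: preimag} and Theorem \ref{th: cardinality of }, each $s$ has $|\Psi_2^{-1}(s)|=\Omega_q$ for a unique $q\in\mathbb{Z}_2$, and $|\mathcal{G}_2|=\delta_0\Omega_0+\delta_1\Omega_1$. Note that $\Psi_2$ is a translate of $g_u\mapsto g_u \pmod{m_0}$ by the known constant $\sum_{\ell\in[u-1]}r_\ell$. This translation does not change the distribution shape, so it does not affect the entropy.

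Given (I)--(IV), the conditional probability of $s$ is $|\Psi_2^{-1}(s)|/|\mathcal{G}_2|$. We then write the loss entropy exactly as in the DHSS case:
\[
\mathsf{H}(\mathbf{S})-\mathsf{H}(\mathbf{S}|\mathbf{V}'_{\mathcal{B}})=\log_2\frac{m_0\Omega_{b_2}}{|\mathcal{G}_2|}+\sum_{q\in\mathbb{Z}_2,q\neq b_2}\delta_q\frac{\Omega_q}{|\mathcal{G}_2|}\log_2\frac{\Omega_q}{\Omega_{b_2}},
\]
where $b_2$ is given by Lemma \ref{le: specail-case-sec3}(2). Using item (4) of that lemma, the second term is bounded by
\[
\Bigl(1-\frac{\delta_{b_2}}{m_0}\cdot\frac{m_0\Omega_{b_2}}{|\mathcal{G}_2|}\Bigr)\log_2(k+1).
\]
Items (2) and (5) then give that both terms tend to $0$ as $m_0\to\infty$. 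Hence for any $\epsilon_1>0$ there is $\sigma_{21}$ such that the loss with respect to $\mathbf{V}'_{\mathcal{B}}$ is below $\epsilon_1/2$ for $m_0>\sigma_{21}$.

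Next we invoke Lemma \ref{le: loss entropy of CHSS}. It supplies $\sigma_{22}$ such that adding condition (V) changes the conditional entropy by less than $\epsilon_1/2$. Taking $\sigma_1=\max\{\sigma_{21},\sigma_{22}\}$ and summing the two differences yields $\mathsf{H}(\mathbf{S})-\mathsf{H}(\mathbf{S}|\mathbf{V}_{\mathcal{B}})<\epsilon_1$, which is Definition \ref{def: Asymptotically Ideal HSS}(1).

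Finally, we must argue that the worst case dominates every other $\mathcal{B}\notin\Gamma_2$. If $|\mathcal{B}\cap(\cup_{w\le\ell}\mathcal{P}_w)|<t_\ell$ for some $\ell<u$, then some $r_\ell$ is not determined. In that case the secret $s=\sum r_\ell$ remains masked by at least the uncertainty in $r_u$, or in the unknown $r_\ell$. Each of these is a single-level Dr\u{a}gan--Tiplea instance, so its loss tends to $0$ by Lemma \ref{le: resutls-Tiplea2018}(4). Since the $r_\ell$ are independent, additional unknown summands can only increase $\mathsf{H}(\mathbf{S}|\mathbf{V}_{\mathcal{B}})$.

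I expect the main obstacle to be this reduction, together with the dependence on Lemma \ref{le: loss entropy of CHSS}, which is only heuristically justified by the one-way functions. I would handle the reduction by conditioning on all $r_\ell$ except one unknown, and then applying the single-level bound.
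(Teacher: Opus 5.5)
\textbf{Comparison with the paper.} Your core computation is the paper's proof. The paper also evaluates $\Delta(\mathbf{V}'_{\mathcal{B}})$ via Lemma~\ref{le: specail-case-sec3}, bounds the second term by $\log_2(k+1)$, and finishes with Lemma~\ref{le: loss entropy of CHSS} ``as in Theorem~\ref{Th: our perfectness1}''. It never tries to reduce a general unauthorized $\mathcal{B}$ to the chosen worst case.

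\textbf{The gap in your added reduction.} The $r_\ell$ are independent a priori, but not conditionally on $\mathbf{V}_{\mathcal{B}}$. For a non-member $i\in\mathcal{P}_{\ell_3}$ with $\ell_3\le u-1$, the published values $w_i^{(\ell)}$, $\ell\in[\ell_3,u]$, are all masked by the \emph{same} hidden share $s_i\in\mathbb{Z}_{m_i}$. So knowledge of one $y_{\ell'}$ propagates to the other levels. An undetermined $r_\ell$ is therefore not a stand-alone Dr$\breve{\mathrm{a}}$gan--Tiplea instance. The bound $\mathsf{H}(\mathbf{S}|\mathbf{V}_{\mathcal{B}})\ge\mathsf{H}(\mathbf{R}_\ell|\mathbf{V}_{\mathcal{B}})$ has no basis, and $\mathsf{H}(\mathbf{R}_\ell|\mathbf{V}_{\mathcal{B}})$ need not be close to $\log_2 m_0$.

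Concretely, let $\mathcal{B}\subseteq\mathcal{P}_u$ with $|\mathcal{B}|\ge t_u$. Such a set exists under the hypothesis $t_u\le|\mathcal{P}_u|$ of Theorem~\ref{Theorem: summary2}, and it is unauthorized since $\mathcal{B}\cap\mathcal{P}_1=\varnothing$.
\begin{itemize}
\item Its shares are $y_u\bmod m_i$, so it recovers $y_u$. Hence it knows $h_u(s_i)\bmod m_i=(y_u-w_i^{(u)})\bmod m_i$ for every $i\in[N_{u-1}]$.
\item This pins $s_i$ down to a preimage set of size $O(1)$ for any hash-like $h_u$. Exhaustive search over $\mathbb{Z}_{m_i}$ finds it with $O(m_0)$ evaluations, and one-wayness is irrelevant to an entropy statement anyway.
\item It then obtains $y_\ell\bmod m_i=(w_i^{(\ell)}+h_\ell(s_i))\bmod m_i$ for all $i\in[N_\ell]$. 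This gives every $y_\ell$, every $r_\ell$, and $s$ up to $O(1)$ candidates.
\end{itemize}
No lemma in the paper covers such configurations; Lemma~\ref{le: loss entropy of CHSS} is stated only for the fixed worst case.

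\textbf{Lemma~\ref{le: loss entropy of CHSS} itself fails.} The same mechanism shows that your concern about this lemma is fatal, not just a matter of rigor. In the paper's own worst case, $\mathcal{B}$ knows $y_{\ell_3}$ for all $\ell_3\le u-1$. For any non-member $i\in\mathcal{P}_{\ell_3}$ (one exists, e.g., whenever $N_{u-1}\ge t_u$), it therefore learns $h_{\ell_3}(s_i)\bmod m_i$. Condition (V) then restricts $g_u\bmod m_i$ to $O(1)$ values. Since $\prod_{j\in\mathcal{B}\cup\{i\}}m_j\ge\prod_{j=1}^{t_u}m_j$, combining this with (I), (III), (IV) leaves $O(1)$ candidates for $g_u$. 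So (V) removes almost all of $\mathcal{G}_2$, not a negligible fraction. Thus $\mathsf{H}(\mathbf{S}|\mathbf{V}_{\mathcal{B}})=O(1)$, while $\mathsf{H}(\mathbf{S}|\mathbf{V}'_{\mathcal{B}})\approx\log_2 m_0$.

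For instance, take $u=2$, $t_1=1$, $t_2=2$, $\mathcal{P}_1=\{P_1,P_2\}$, and read Step~2 as $y_\ell=r_\ell+\alpha_\ell m_0$, as (II) presupposes. Then $P_1$ alone proceeds as follows:
\begin{itemize}
\item it gets $y_1=(h_1(s_1)+w_1^{(1)})\bmod m_1$;
\item then $h_1(s_2)\bmod m_2$, and from that $y_2\bmod m_2$;
\item together with $y_2\bmod m_1=(h_2(s_1)+w_1^{(2)})\bmod m_1$ and $y_2<m_1m_2$, it obtains $y_2$, and hence $s$.
\end{itemize}

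\textbf{Conclusion.} Your main argument is faithful to the paper. But what either argument actually establishes is only $\Delta(\mathbf{V}'_{\mathcal{B}})\to 0$, i.e.\ asymptotic perfectness relative to conditions (I)--(IV). Definition~\ref{def: Asymptotically Ideal HSS}(1) requires passing to $\mathbf{V}_{\mathcal{B}}$, and that step cannot be supplied, neither in the worst case nor via your reduction.
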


\begin{proof}
 The secret in our scheme 2 is randomly and uniformly, which means that $\mathsf{H}(\textbf{S})=\log_2|\mathcal{S}|=\log_2 m_0$. For any $s\in \mathcal{S}$, there is a unique $q\in \mathbb{Z}_{2}$ such that $|\Psi_2^{-1}(s)|=\Omega_q$ by Theorem \ref{th: cardinality of } . Therefore, the loss entropy
 \begin{displaymath}
                          \begin{aligned}
              &\Delta(\textbf{V}^\prime_\mathcal{B}) = \mathsf{H}(\mathbf{S}) - \mathsf{H}(\mathbf{S}|\textbf{V}^\prime_\mathcal{B})\\
                    =&\log_2 m_0-
                      \sum_{s\in \mathcal{S}}
                      \mathsf{Pr}(\mathbf{S}=s|\textbf{V}^\prime_\mathcal{B}=\mathcal{V}^\prime_{\mathcal{B}})\log_2 \frac{1}{\mathsf{Pr}(\mathbf{S}=s|\textbf{V}^\prime_\mathcal{B}=\mathcal{V}^\prime_{\mathcal{B}})}\\
                    =&\log_2 m_0+\sum_{s\in \mathcal{S}}\frac{|\Psi_{2}^{-1}(s)|}{|\mathcal{G}_{2}|}\log_2\frac{|\Psi_{2}^{-1}(s)|}{|\mathcal{G}_{2}|}\\
                    =&\log_2 m_0+\delta_1\frac{\Omega_1}{|\mathcal{G}_{2}|}\log_2\frac{\Omega_1}{|\mathcal{G}_{2}|}
                    +\delta_2\frac{\Omega_2}{|\mathcal{G}_{2}|}\log_2\frac{\Omega_2}{|\mathcal{G}_{2}|}\\
                    =&\log_2\frac{m_0\Omega_{b_2}}{|\mathcal{G}_2|}+\sum_{q\in \mathbb{Z}_{2},q\neq {b_2}}\delta_q\frac{\Omega_q}{|\mathcal{G}_2|}\log_2\frac{\Omega_q}{\Omega_{b_2}}.
            \end{aligned}
          \end{displaymath}
By Lemma \ref{le: specail-case-sec3}, for a sufficiently large $m_0$ we have
    \[0\leq\Delta(\textbf{V}^\prime_\mathcal{B})\leq \log_2\frac{m_0\Omega_{b_2}}{|\mathcal{G}_2|}+\sum_{q\in \mathbb{Z}_{2^u},q\neq {b_2}}\delta_q\frac{\Omega_q}{|\mathcal{G}_2|}\log_2(k+1)
    .\]
Since
     \begin{displaymath}
             \begin{aligned}
             &\lim\limits_{m_0\to\infty}\left(\log_2\frac{m_0\Omega_{b_2}}{|\mathcal{G}_2|}+\sum_{q\in \mathbb{Z}_{2^u},q\neq {b_2}}\delta_q\frac{\Omega_q}{|\mathcal{G}_2|}\log_2(k+1)\right)\\
           =&\lim\limits_{m_0\to\infty}\left(\log_2\frac{m_0\Omega_{b_2}}{|\mathcal{G}_2|}+\frac{|\mathcal{G}_2|-\delta_{b_2}\Omega_{b_2}}{|\mathcal{G}_2|}\log_2(k+1)\right)\\
           =&\lim\limits_{m_0\to\infty}\left(\log_2\frac{m_0\Omega_{b_2}}{|\mathcal{G}_2|}+\left(1-\frac{\delta_{b_2}}{m_0}\cdot\frac{m_0\Omega_{b_2}}{|\mathcal{G}_2|}\right)\log_2(k+1)\right)\\
           =&0,
     \end{aligned}
          \end{displaymath}
which means that $\lim\limits_{m_0\to\infty}\Delta(\textbf{V}^\prime_\mathcal{B})$=0. By the same discussion as in Theorem \ref{Th: our perfectness1}, we get that our scheme 2 is asymptotically perfect by Lemma \ref{le: loss entropy of CHSS}.
\end{proof}

\begin{theorem}\label{Theorem: summary2}
Let $\mathcal{P}$ be a set of $n$ participants, and it is partitioned into $u$ disjoint subsets $\mathcal{P}_1, \mathcal{P}_2,\ldots, \mathcal{P}_u$. For a threshold sequence $t_1, t_2, \ldots, t_u$ satisfying $1\leq t_1< t_2<\cdots<t_u\leq n$ and $t_{\ell}\leq |\mathcal{P}_{\ell}|$ for $\ell\in [u]$, our scheme 2 is a secure and asymptotically perfect CHSS scheme. Moreover, if $L=\{m_{0},m_{1},\ldots,m_{n}\}$ is a $1$-compact sequence of co-primes, then our scheme 1 is an asymptotically ideal CHSS scheme.
\end{theorem}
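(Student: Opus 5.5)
The proof will mirror the proof of Theorem \ref{Theorem: summary1}, with the results of Section 4 used in place of those of Section 3. The statement has two parts, and I treat them in order.

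\emph{Security and asymptotic perfectness.} Theorem \ref{Theorem:Correctness of our scheme2} shows that every $\mathcal{A}\in\Gamma_2$ recovers $y_1,\ldots,y_u$. Hence it recovers $r_\ell=y_\ell \pmod{m_0}$ for each $\ell$, and then $s=\sum_{\ell}r_\ell \pmod{m_0}$. This gives the correctness requirement of Definition \ref{def: Multilevel secret sharing of conjunctive}.

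For privacy, take any $\mathcal{B}\notin\Gamma_2$. I reduce to the worst case treated before Lemma \ref{le: loss entropy of CHSS}: $\mathcal{B}$ meets every threshold except that $|\mathcal{B}\cap(\cup_{w=1}^{u}\mathcal{P}_w)|=t_u-1$. Other unauthorized sets can be padded (conceptually) to such a set without decreasing their information about $s$. Theorem \ref{Th: our perfectness2} already shows that in this case the loss entropy tends to $0$. It combines Lemma \ref{le: loss entropy of CHSS}, which controls the negligible effect of condition (V), with Lemma \ref{le: specail-case-sec3}, which controls the $\mathcal{G}_2$ counting.

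If the failed threshold is instead some $\ell<u$, the same counting argument applies with $u$ replaced by that level $\ell$. The unknown quantity is then $r_\ell$, obtained through $y_\ell$, and Lemmas \ref{le: preimag} and \ref{le: specail-case-sec3} carry over verbatim with $t_\ell$ and $\mathcal{B}^{(\ell)}$. So the scheme is asymptotically perfect, and with correctness it is a secure CHSS scheme.

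\emph{Information rate.} Every share lies in some $\mathbb{Z}_{m_i}$, for $c_i$ as well as for $y_u \pmod{m_i}$. So the largest share space is $\mathbb{Z}_{m_n}$ and the secret space is $\mathbb{Z}_{m_0}$. The public values $w_i^{(\ell)}$ are not counted in the share size, as in Theorem \ref{Theorem: summary1}.

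When $L$ is $1$-compact, there is a $\theta\in(0,1)$ with $m_0<m_n<m_0+m_0^{\theta}$. Hence
\[\frac{\log_2 m_0}{\log_2(m_0+m_0^{\theta})}<\rho_2=\frac{\log_2 m_0}{\log_2 m_n}<1.\]
Since $\log_2(m_0+m_0^\theta)=\log_2 m_0+\log_2(1+m_0^{\theta-1})$ and the second term tends to $0$, we get $\rho_2\to1$. This is condition (2) of Definition \ref{def: Asymptotically Ideal HSS}, with $\mathsf{H}(\mathbf{S})=\log_2 m_0$ for a uniform secret and $\mathsf{H}(\mathbf{S}_i)\le\log_2 m_n$. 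Together with asymptotic perfectness, the scheme is asymptotically ideal.

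The statement's phrase ``our scheme 1'' in the last sentence is read as ``our scheme 2''.

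\emph{Main obstacle.} The delicate step is the reduction to the single worst case. Theorem \ref{Th: our perfectness2} handles only a deficit at level $u$, so the other cases must be argued to be analogous. The point to check there is that the random splitting $s=\sum_\ell r_\ell$ makes the unknown $r_\ell$ for the deficient level act exactly like the secret in Lemma \ref{le: preimag}. The remaining $r_{\ell'}$ are all known to $\mathcal{B}$ and only shift the preimage classes, so the counting is unchanged.
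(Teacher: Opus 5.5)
Your route is the paper's own. Correctness comes from Theorem~\ref{Theorem:Correctness of our scheme2}, asymptotic perfectness from Theorem~\ref{Th: our perfectness2} (that is, Lemmas~\ref{le: loss entropy of CHSS} and~\ref{le: specail-case-sec3}), and the information-rate estimate is reused from Theorem~\ref{Theorem: summary1}. Like the paper, you read ``scheme 1'' as ``scheme 2'' and Step 2 as $y_\ell=r_\ell+\alpha_\ell m_0$; as typed, $y_\ell=s+\alpha_\ell m_0$, and any set recovering a single $y_\ell$ gets $s$. Your reduction to a single-deficit set goes beyond the paper's bare ``consider the worst case'', and it is sound. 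Let $\ell^*$ be the lowest failing level. Pad $\mathcal{B}$ to exactly $t_{\ell^*}-1$ members of $[N_{\ell^*}]$ and add all of $[N_u]\setminus[N_{\ell^*}]$, which suffices because $t_\ell\le|\mathcal{P}_\ell|$. Then use that $\mathsf{H}(\mathbf{S}\mid\mathbf{V}_{\mathcal{B}})$ can only decrease as $\mathcal{B}$ grows.

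The genuine gap, inherited from the paper, is the step you treat as settled: Lemma~\ref{le: loss entropy of CHSS}, the claim that condition (V) removes only negligible entropy. In the entropy sense of Definition~\ref{def: Asymptotically Ideal HSS} this is false. Take $u=2$, $\mathcal{P}_1=\{P_1,P_2,P_3\}$, $|\mathcal{P}_2|\ge 3$, $t_1=2$, $t_2=3$ and $\mathcal{B}=\{P_1,P_2\}\notin\Gamma_2$.
\begin{itemize}
\item $\mathcal{B}$ recovers $y_1$, so the public value $w_3^{(1)}$ yields $h_1(s_3)\equiv y_1-w_3^{(1)}\pmod{m_3}$.
\item Let $T$ be the set of $\tilde s\in\mathbb{Z}_{m_3}$ with this hash value. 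It contains $s_3$ and, for any random-like $h_1$, has $O(1)$ expected size.
\item Hence $y_2\bmod m_3\in\{(h_2(\tilde s)+w_3^{(2)})\bmod m_3:\tilde s\in T\}$.
\item $\mathcal{B}$ also holds $y_2\bmod m_1$ and $y_2\bmod m_2$, and $y_2<m_1m_2m_3$. So at most $|T|$ values of $y_2$ remain, hence at most $|T|$ values of $s=r_1+r_2$.
\end{itemize}
Thus $\mathsf{H}(\mathbf{S}\mid\mathbf{V}_{\mathcal{B}})=O(1)$ while $\mathsf{H}(\mathbf{S})=\log_2 m_0$. One-wayness only makes $T$ expensive to compute; it does not enter the entropy.

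The same leak appears in your level-$\ell<u$ case. There $\mathcal{B}$ knows $y_u$, and an outsider $i\in[N_\ell]\setminus\mathcal{B}$ always exists, so $w_i^{(u)}$ pins down $s_i$ and hence $y_\ell\bmod m_i$. Your ``main obstacle'' is therefore misplaced: the reduction is fine, but the perfectness half of the theorem cannot be obtained this way. As stated, it fails for any random-like choice of the $h_\ell$, and only a computational version survives. Correctness and the information-rate bound are unaffected.
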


\begin{proof}
  By Definition \ref{def: Multilevel secret sharing of conjunctive}, Theorem \ref{Theorem:Correctness of our scheme2} and Theorem \ref{Th: our perfectness2}, our scheme 2 is a secure and asymptotically perfect CHSS scheme. From the proof of Theorem \ref{Theorem: summary1}, it holds that the information $\rho_2$ of our scheme 2 satisfies $\lim\limits_{m_0\to\infty}\rho_2=1$. Hence, by Definition \ref{def: Asymptotically Ideal HSS}, our scheme 2 is an asymptotically ideal CHSS scheme.
\end{proof}

\section{Conclusion}\label{Sec: conclusion}
In this work, we construct asymptotically ideal DHSS and CHSS schemes based on the Chinese Remainder Theorem over integer rings. Compared to prior CRT-based constructions, our scheme achieves security, an information rate asymptotically approaching $1$, and computational efficiency.



\begin{thebibliography}{99}

\bibitem{Shamir1979} A. Shamir, How to share a secret, Communications of the ACM, vol.22, no.11, pp.612-613, 1979.

\bibitem{Blakley1979} G. R. Blakley, Safeguarding cryptographic keys, in: 1979 International Workshop on Managing Requirements Knowledge, MARK, New York, NY, USA, pp. 313-318. IEEE, 1979.

\bibitem{Simmons1988} G. J. Simmons, How to (really) share a secret, in: Advances in Cryptology-Crypto'88, Santa Barbara, California, USA. Lecture Notes in Computer Science, vol.403, pp.390-448. Springer, 1988.

\bibitem{Tassa2007} T. Tassa, Hierarchical threshold secret sharing, Journal of Cryptology, vol.20, no.2, pp.237-264, 2007.

\bibitem{Chenqi2022} Q. Chen, C. Tang, and Z. Lin, Efficient explicit constructions of multipartite secret sharing schemes, IEEE Transactions on Information Theory, vol.68, no.1, pp.601-631, 2022.

\bibitem{YuanJiaotong2022}J. Yuan, J. Yang, C. Wang, X. Jia, F. Fu, G. Xu, A new efficient hierarchical multi-secret sharing scheme based on linear homogeneous recurrence relations, Information Sciences, vol. 592, pp.36-49,2022.

\bibitem{Harn-Miao2014} L. Harn, and F. MIAO, Multilevel threshold secret sharing based on the Chinese remainder theorem, Information Processing Letters, vol.114, no.9, 2014. 504-509.

\bibitem{Oguzhan-Ersoy2016} O. Ersoy, K. Kaya, and K. Kaskaloglu, Multilevel Threshold Secret and Function Sharing based on the Chinese Remainder Theorem, arXiv: 1605.07988, https://arxiv.org/abs/1605.07988.

\bibitem{Tiplea2021}F. L. Tiplea, and C. C. Dr$\breve{\mathrm{a}}$gan, Asymptotically ideal Chinese remainder theorem-based secret sharing schemes for multilevel and compartmented
access structures, IET Information Security, vol. 15, no. 4, pp. 282-296, 2021.

\bibitem{Yangjing2024} J. Yang, S.-T. Xia, X. Wang, J. Yuan, and F.-W. Fu, A perfect ideal hierarchical secret sharing scheme based on the CRT for Polynomial Rings, in: 2024 IEEE International Symposium on Information Theory (ISIT), Athens, Greece, pp.321-326. IEEE, 2024.

\bibitem{Hongjuarx} H. Li, J. Ding, F. Miao, C.Wang, C. Shu, Novel CRT-based Asymptotically Ideal Disjunctive Hierarchical Secret Sharing Scheme, 2026, arXiv:2603.16267.

\bibitem{CRTdefinition} H. Cohen, A Course in Computational Algebraic Number Theory, 4thed., Grad. Texts Math, Springer-Verlag, 2000.

\bibitem{Tiplea2018} C. C. Dr$\breve{\mathrm{a}}$gan, and F. L. Tiplea, On the asymptotic idealness of the Asmuth-Bloom threshold secret sharing scheme, Information Sciences, Volumes 463¨C464, pp. 75-85, 2018.


\bibitem{Ning2018} Y. Ning, F. Miao, W. Huang, K. Meng, Y. Xiong, and X. Wang, Constructing ideal secret sharing schemes based on Chinese Remainder Theorem, in: Advances in Cryptology - {ASIACRYPT} 2018, Brisbane, QLD, Australia. Lecture Notes in Computer Science, vol. 11274, pp. 310-331. Springer, 2018.

\bibitem{Quisquater2002} M. Quisquater, B. Preneel, and J. Vandewalle, On the security of the threshold scheme based on the Chinese remainder theorem, in: Public Key Cryptography, PKC 2002. Lecture Notes in Computer Science, vol. 2274, pp.199-210. Springer, Berlin, Heidelberg, 2002.

\bibitem{Brickell1989} E. F. Brickell, Some ideal secret sharing schemes, in: Advances in Cryptology-Eurocrypt'89, Houthalen, Belgium. Lecture Notes in Computer Science, vol.434, pp.468-475. Springer, 1989.

\bibitem{Asmuth-Bloom1983}C. Asmuth and J. Bloom, A modular approach to key safeguarding, IEEE Transactions on Information Theory, vol. 30, no. 2, pp. 208-210, 1983.

\bibitem{Mo2023} S. Mo, Ideal hierarchical secret sharing and lattice path matroids, Designs, Codes and Cryptography, vol.91, no.4, pp. 1335-1349, 2023.

\end{thebibliography}
\end{document}